\numberwithin{equation}{section}
\newtheorem{Theorem}{Theorem}[section]
\newtheorem*{Theorem*}{Theorem}
\newtheorem{Lemma}[Theorem]{Lemma}
\newtheorem{Proposition}[Theorem]{Proposition}
 { \theoremstyle{definition}
\newtheorem{Definition}[Theorem]{Definition}

\newtheorem{Remark}[Theorem]{Remark} }
\begin{document}

\allowdisplaybreaks

\newcommand{\arXivNumber}{2404.13688}

\renewcommand{\PaperNumber}{087}

\FirstPageHeading

\ShortArticleName{Algebraic Complete Integrability of the $a_4^{(2)}$ Toda Lattice}

\ArticleName{Algebraic Complete Integrability\\ of the $\boldsymbol{a_4^{(2)}}$ Toda Lattice}

\Author{Bruce Lionnel LIETAP NDI~$^{\rm a}$, Djagwa DEHAINSALA~$^{\rm b}$ and Joseph DONGHO~$^{\rm a}$}

\AuthorNameForHeading{B.L.~Lietap Ndi, D.~Dehainsala and J.~Dongho}

\Address{$^{\rm a)}$~University of Maroua, Faculty of Sciences, Department of Mathematics Computer Sciences,\\
\hphantom{$^{\rm a)}$}~P.O.~Box 814, Maroua, Cameroon}
\EmailD{\href{mailto:nbruce.lionnel@gmail.com}{nbruce.lionnel@gmail.com}, \href{mailto:josephdongho@yahoo.fr}{josephdongho@yahoo.fr}}

\Address{$^{\rm b)}$~Department of Mathematics, Faculty of Exact and Applied Sciences, University of NDjamena,\\
\hphantom{$^{\rm b)}$}~1 route de Farcha, P.O.~Box 1027, NDjamena, Chad}
\EmailD{\href{mailto:djagwa73@gmail.com}{djagwa73@gmail.com}}

\ArticleDates{Received April 25, 2024, in final form September 25, 2024; Published online October 05, 2024}

\Abstract{The aim of this work is focused on the investigation of the algebraic complete integrability of the Toda lattice associated with the twisted affine Lie algebra \smash{$a_4^{(2)}$}. First, we prove that the generic fiber of the momentum map for this system is an affine part of an abelian surface. Second, we show that the flows of integrable vector fields on this surface are linear. Finally, using the formal Laurent solutions of the system, we provide a~detailed geometric description of these abelian surfaces and the divisor at infinity.}

\Keywords{Toda lattice; integrable system; algebraic integrability; abelian surface}

\Classification{34G20; 34M55; 37J35}

\section{Introduction}
The study of integrable Hamiltonian systems has been motivated by
several factors, including the development of powerful and beautiful
mathematical theories and the application of integration concepts to
various physical, biological and chemical systems. However, it
remains challenging to describe or recognize integrable Hamiltonian
systems with ease, as they are exceptional cases. The Korteweg--de Vries
equation has generated numerous new ideas in the field of
completely integrable Hamiltonian systems, leading to unexpected
connections between mechanics, spectral theory, Lie algebra theory,
algebraic geometry, and even differential geometry. Some interesting
integrable systems also appear as coverings of algebraic complete
integrable systems. These systems are sometimes also called
algebraic complete integrable.

 An algebraic complete integrable system can be linearized on a complex torus, and its
invariant functions (often called first integrals or constants) are
polynomial maps and their restrictions to an invariant complex
variety are meromorphic functions on a complex abelian variety. The
fluxes generated by the constants of motion are straight lines in
this complex abelian variety.

Several nonlinear completely integrable systems were known in the
19th century, among them the geodesic flow on the ellipsoid,
Neumann's system or the Kowalevski top. The Toda lattice, introduced
by Morikazu Toda in 1967, is a simple model for a one-dimensional
crystal in solid-state physics. It is famous because it is one of
the first integrable systems for which a Lax pair was discovered by
Flaschka. The classical Toda lattice is a system of particles with
unit mass, connected by exponential springs. Its equations of motion derived from the Hamiltonian
\begin{equation}
 H=\frac{1}{2}\sum_{j=1}^n p_j^2+ \sum_{j=1}^{n-1}
 {\rm e}^{q_j-q_{j+1}}, \label{hamilttodaclasnperio}
\end{equation}
where $q_j$ is the position of the $j$-th particle and
$p_j$ is its amount of movement. This type of Hamiltonian was
considered first by Morikazu Toda \cite{toda,toda1}. The equation
\eqref{hamilttodaclasnperio} is known as the finite classic non-periodic Toda lattice to distinguish other versions of various forms
of the system. The periodic version of \eqref{hamilttodaclasnperio}
is given by
\[H=\frac{1}{2}\sum_{j=1}^n p_j^2+ \sum_{j=1}^n {\rm e}^{q_j-q_{j+1}}, \qquad q_{n+1} =q_1,\]
 where the equations of motion are given by
\[
\dot{p}_j=-\frac{\partial H}{\partial q_j}=
{\rm e}^{(q_{j-1}-q_j)}-{\rm e}^{(q_j-q_{j+1})},\qquad \dot{q}_j=\frac{\partial H }{\partial p_j}=p_j, \qquad 1\leq j\leq n.
\]
The integrability of the periodic Toda lattice was established by
Henon \cite{henon} and Flaschka \cite{flaschka} using the Lax pairs
method. In 1976, Bogoyavlensky \cite{bogoya} introduced a
generalization of the classical Toda periodic lattice to arbitrary
Lie algebras.

 Adler, van~Moerbeke and Vanhaecke in \cite{al_Moer3} give the explicit
Hamiltonians for the periodic Toda lattices that involve precisely
three (connected) particles. In two-dimensional, there are precisely six cases of them, going with the extended root systems
\smash{$a_2^{ (1 )}$}, \smash{$a_4^{(2)}$},
\smash{$c_2^{ (1 )}$}, \smash{$d_3^{(2)}$},
\smash{$g_2^{ (1 )}$} and~\smash{$d_4^{ (3 )}$}. They prove in \cite{al_Moer5,al_Moer3}
that the case \smash{$a_2^{ (1 )}$} is algebraic completely integrable. In his case, Dehainsala prove in~\cite{Dehainsala} that the two cases \smash{$c_2^{ (1 )}$} and \smash{$d_3^{(2)}$} are algebraic complete integrable.

In this work, we consider that \smash{$a_4^{(2)}$} is a
two-dimensional integrable system. This system satisfies the
linearization criterion \cite[Theorem 6.41]{al_Moer3}. We prove
that this system is an algebraic completely integrable in the
Adler--van Moerbeke sense.

To prove this, with respect to the complex Liouville theorem,
firstly, by the indicial locus and the Kowalevski matrix, we have
shown that our system has three distinct families of homogeneous
Laurent solutions with weights depending on four free parameters and
find the Zariski open set~$\Omega$ and the fiber $\mathbb{F}_c$,
$c \in \Omega$. Secondly, with these Laurent solutions, we determined
the Painlev\'{e}~divisor and their arithmetic genus. We have obtain
three divisors, for $c \in \Omega$, the Painlev\'e divisor~\smash{$\Gamma_c^{(0)}$} is a~smooth genus three hyperelliptic
curve, the Painlev\'e divisor~\smash{$\Gamma_c^{(1)}$} is a~smooth genus four curve and the Painlev\'e divisor
\smash{$\Gamma_c^{(2)}$} is a~smooth genus two hyperelliptic
curve. Thirdly, to compact the fiber, we determined the projective
space to embedding the divisor at infinity to find the singularities
and the intersection between the different curves. We have obtain
twenty-five $(25)$ functions which forms the basis of the
projective space and determined the intersections points. To end our
prove, we show that the vector field
$(\varphi_c)_{\ast}\mathcal{V}_1$ extends to an
holomorphic vector fields on $\mathbb{P}^{24}$. To show that the
vector field $(\varphi_c)_{\ast}\mathcal{V}_1$ is
holomorphic on two chart of $\mathbb{P}^{24}$, we have established
that this vector field can be written as a quadratic vector field in
two appropriate chart.\looseness=-1

This paper is organized as follows. In Section~\ref{sec2}, we review the
basic notions of algebraic integrability in sense of Adler--van Moerbeke. Section~\ref{sec3} contains the main part of the paper, we verify
that the \smash{$a_4^{(2)}$} Toda lattice is Liouville
integrable, we do the Painlev\'{e} analysis of the system. This
analysis shows that our integrable system admits three principal
balances, i.e., three families of Laurent solutions depending on the
maximal number of free parameters,four in our case. Thus, by
confining each family of Laurent solutions to the invariant
manifolds we calculate the Painlev\'{e} divisors associated to these
principal balances and, for $c\in \Omega$, we give an explicit
embedding of the invariant manifold $\mathbb{F}_c$ in the
projective space $\mathbb{P}^{24}$. Finally, in~Section~\ref{sec4}, we
determine the holomorphic differentials forms on the abelian
surface.

\section{Preliminaries}\label{sec2}
In this section, we also recall some basics notions. For more
comprehension just read the book~\cite{al_Moer3}.

 Let $v =
(v_1, \dots, v_n)$ be a collection of positive integers
without a common divisor. Such a $v$ is called a weight vector.
\begin{Definition}[{\cite{al_Moer3}}]
 A polynomial $f\in \mathcal{F}(\mathbb{C}^n):=\mathbb{C}[x_1,\dots,x_n]$ is a weight homogeneous polynomial of weight
$k$ (with respect to $v$) if
$f(t^{v_1}x_1,\dots,t^{v_n}x_n)=t^kf(x_1,\dots,x_n)$, $\forall \, (x_1,\dots,x_n)\in
\mathbb{C}^n$ and $t\in \mathbb{C}$.
\end{Definition}
When the weight of $f$ is $k\in \mathbb{N}$, we denote
\[\mathcal{F}^{(k)}:= \{F\in\mathcal{F}(\mathbb{C}^n)\mid \varpi(F) = k
\}.\]

\begin{Definition}[{\cite{al_Moer3}}]
A polynomial vector field on $\mathbb{C}^n$
\begin{equation}\label{polyvect}
\begin{cases}
\dot{x}_1 = f_1(x_1,\dots, x_n),\\
 \cdots\cdots\cdots\cdots\cdots\cdots\cdot \\
\dot{x}_n = f_n(x_1,\dots, x_n)
 \end{cases}
\end{equation}
is called a weight homogeneous vector field of weight $k$ (with
respect to $v$) if each of the polynomials $f_1,\dots,f_n$ is
weight homogeneous (with respect to $v$) and if
$\varpi(f_i )=v_i + k =\varpi(x_i)+ k$ for
$i=1,\dots,n$.
\end{Definition}

According to \cite[Proposition 7.6]{al_Moer3}, if \eqref{polyvect}
is a weight homogeneous vector field, then Laurent solutions have the
form
\[x_i(t)=\frac{1}{t^{v_i}}\sum_{k=0}^\infty x_i^{(k)}t^k,
\qquad i=1,\dots,n,
\quad \text{with} \quad x^{(0)}=\bigl(x_1^{(0)},\dots,x_n^{(0)}\bigr)\neq0,\]
 are called weight homogeneous Laurent solution. We will say that a formal Laurent solution is a principal
balance if it depends on $n-1$ free parameters; otherwise, it will
be called the lower balance.

The positive integers $v_i$ being the weights of the phase
variables, then the leading coef\-ficients~\smash{$x_i^{(0)}$}
satisfy the nonlinear algebraic equations
\begin{equation}\label{equaindicial}
 v_ix_i^{(0)}+f_i\bigl(x_1^{(0)},x_2^{(0)},\dots,x_n^{( 0)}\bigr)=0, \qquad i=1,\dots,n,
\end{equation}
while the other coefficients \smash{$x_i^{(k)}$},
$k\geq1$, satisfy
\[\bigl(k\operatorname{Id}_n-\mathcal{K}\bigl(x^{(
0)}\bigr)\bigr)x^{(k)}=P^{(k)},\]
 where
 \[x^{(k)}=\bigl(x_1^{(k)},x_2^{(k)},\dots,x_n^{(
k)}\bigr)^{\mathsf{T}},
\qquad P^{(k)}=\bigl(P_1^{(k)},P_2^{(k)},\dots,P_n^{(
k)}\bigr)^{\mathsf{T}}
\]
and each \smash{$P_i^{(k)}$} is a polynomial which depends only on the variables \smash{$x_1^{(l)},x_2^{(l)},\dots,
x_n^{(l)}$} with ${0\leq l\leq k}$. The matrix
$\mathcal{K}$ of order $n$ is defined by
\smash{$\mathcal{K}_{ij}=\frac{\partial f_i}{\partial x_j}+v_i
\delta_{ij}$} is called the Kowalevski matrix.

 Let $\Lambda$ be a discrete subgroup of rank $k$ of
$\mathbb{C}^n$. The quotient group $\mathbb{C}^n/ \Lambda$ has a
smooth complex manifold structure induced by the projection
$\pi\colon \mathbb{C}^n\longrightarrow \mathbb{C}^n/ \Lambda$. This
complex manifold is compact if and only if $k=2n$. In this case,
$\mathbb{T}^n=\mathbb{C}^n/ \Lambda$ is a complex torus. If such a
complex torus is also an algebraic variety, i.e., admits a projective
embedding, we then say that it is an abelian variety. If $k=2n=4$,
we say that it is an abelian surface.

To define an algebraic complete integrable system, we need the
following definitions.

\begin{Definition}[\cite{al_Moer3}] A fiber of $\mathbf{F}$ is a common level set of the functions
$\mathbf{F_i}$. The fiber of $\mathbf{F}$ that passes through
$m\in M$ will be denoted by \[\mathbf{F_m}=\{p\in
M\mid \mathbf{F_i}(p)=\mathbf{F_i}(m),\, \forall i
\in \{1,\dots,s\}\}.\]
\end{Definition}

For $c\in \mathbb{R}^s$ (resp.\ $\mathbb{C}^s$), we will note the
fiber $\mathbf{F}^{-1}(c)$ above $c$ by
$\mathbf{F}_c$. So, we have
\[\mathbf{F}_m=\mathbf{F}^{-1}(\mathbf{F}(m))=\mathbf{F}_{\mathbf{F} (m)}\qquad \forall m \in M.\]
The set of regular values of $\mathbf{F}$ is a residual
subset (hence a dense subset) of $\mathbb{R}^s$ (resp.\
$\mathbb{C}^s$). By~the inverse function theorem, the fiber
$\mathbf{F}_c$ over each regular value $c$ that lies in the
image of~$\mathbf{F}$ is non-singular. Hence, when
$\mathbf{F}=(F_1,\dots,F_s)$ is involutive, the
Hamiltonian vector fields~$\mathcal{X}_{\mathbf{F}_i }$,
$1\leq i \leq s$, commute and for any point $m$ they
are tangent to the non-singular affine part of~$\mathbf{F}(m)$.

\begin{Definition}
An abelian variety is a complex torus $\mathbb{C}^r/\Lambda$
$(\Lambda \mbox{ a lattice in } \mathbb{C}^r)$ which is
projective, which means that it admits an embedding in a project
space $\mathbb{P}^N$.
\end{Definition}

An abelian variety $\mathbb{T}^r$ will be called an
\textit{irreducible abelian variety}, when $\mathbb{T}^r$ does not
contain any abelian subvariety, otherwise it will be called a
reducible abelian variety.

\begin{Definition}
Let $(M,\{\cdot,\cdot\}, \mathbf{F})$ be a complex
integrable system, where $M$ is a non-singular affine variety and
where $\mathbf{F}=(F_1,\dots,F_s)$. We say that
$(M,\{\cdot,\cdot\}, \mathbf{F})$ is an algebraic completely integrable system if for generic $c
\in \mathbb{C}^s$ the fiber~$\mathbf{F}_c$ is an affine part of
an abelian variety and if the Hamiltonian vector fields
$\mathcal{X}_{\mathbf{F}_i } $ are translation invariant, when
restricted to these fibers. In the particular case in which $M$ is
an affine space~$\mathbb{C}^n$, we will call
$(\mathbb{C}^n,\{\cdot,\cdot\}, \mathbf{F})$ a~polynomial algebraic complete integrable system. When the generic abelian variety of the
algebraic complete integrable system is irreducible, we speak of an irreducible algebraic complete integrable
system.\looseness=1
\end{Definition}

 An integrable system is said to be algebraic completely integrable
if the fibers of the momentum map are affine parts of abelian varieties and the integrable fields are linear.

The following theorem gives a necessary condition for the algebraic
integrability of an integrable system. It is inspired by the
Kowalevski work \cite{Kowaleski}, and is based on the fact that the
phase space of an algebraic complete integrable system admits a partial compactification on
which the integrable vector fields extend into complete vector
fields. This means that each of the integrable vector fields of an
irreducible algebraic complete integrable system on $\mathbb{C}^n$ admits one or several
families of Laurent solutions (called balances), which will lead to
a necessary condition for algebraic complete integrability, which we
call the Kowalevski--Painlev\'e criterion.\looseness=1

\begin{Theorem}[Kowalevski--Painlev\'e criterion {\cite{al_Moer3}}]
Let $ (\mathbb{C}^n,\{\cdot,\cdot\}, \mathbf{F} )$ be an
irreducible polynomial algebraic complete integrable system, where
$\mathbf{F}= (F_1,\dots,F_s )$ is a family of polynomials
and $ (x_1,\dots,x_n )$ is a system of linear
coordinates on $\mathbb{C}^n$. Let $\mathcal{V}$ be any one of
the integrable vector fields
$\mathcal{X}_{F_1},\dots,\mathcal{X}_{F_s}$. For each $1\leq i
\leq n$ such that $x_i$ is not constant along the integral curve
of~$\mathcal{V}$, i.e., $\dot{x}_i:=\mathcal{V}[x_i]\neq 0$,
there exists a principal balance
$x (t )= (x_1 (t ), \dots,
x_n (t ) )$, depending on $n-1$ free parameters
for which $x_i (t )$ has a pole.
\end{Theorem}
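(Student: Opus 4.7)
The plan is to read off the Laurent solution directly from the abelian compactification of a generic fiber. Fix a regular value $c\in\mathbb{C}^s$; by algebraic complete integrability the fiber $\mathbf{F}_c\subset\mathbb{C}^n$ is an affine part of an abelian variety $\mathbb{T}_c$ of complex dimension $r=n-s$, and $\mathcal{V}|_{\mathbf{F}_c}$ extends to a translation-invariant holomorphic vector field on $\mathbb{T}_c$ whose integral curves are straight lines in a uniformizing coordinate. Since $\mathbf{F}_c$ is affine in $\mathbb{C}^n$, each coordinate $x_j$ is regular on $\mathbf{F}_c$ and extends as a meromorphic function on the compact $\mathbb{T}_c$ with pole locus contained in the divisor at infinity $\mathcal{D}_c:=\mathbb{T}_c\setminus\mathbf{F}_c$.

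Now specialize to the coordinate $x_i$ singled out by $\mathcal{V}[x_i]\neq 0$. An abelian variety admits no non-constant holomorphic functions, so the assumption $\dot{x}_i\neq 0$ forces the meromorphic function $x_i$ on $\mathbb{T}_c$ to have a non-empty polar divisor $(x_i)_\infty\subseteq\mathcal{D}_c$, effective and of pure codimension one, i.e., of dimension $r-1$. For an open set of initial points, the straight-line integral curve of $\mathcal{V}$ meets the smooth locus of $(x_i)_\infty$ transversally at some time $t_0$; expanding the coordinates $x_j$ along this integral curve in powers of $t-t_0$, and using the autonomy of the system to normalize $t_0=0$, produces a formal weight-homogeneous Laurent solution of the original ODE in which $x_i$ genuinely diverges at the origin.

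To match the definition of a principal balance one counts parameters: the regular value $c$ contributes $s$ continuous parameters, and the hitting point on $(x_i)_\infty$ (which, together with the constant flow direction of $\mathcal{V}$, determines the integral curve up to time-translation) contributes $r-1$ further parameters, totalling $s+(r-1)=n-1$. Letting the fiber $c$ vary alongside the hitting point exhibits these as analytically independent deformations of the Laurent coefficients $x^{(k)}$, so that the family produced is indeed a principal balance rather than a lower balance.

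The technical heart of the argument is algebro-geometric. One must check that the closure of $\mathbf{F}_c$ inside $\mathbb{T}_c$ is sufficiently normal for every $x_j$ to extend meromorphically and for the polar divisor $(x_i)_\infty$ to be non-empty of the expected codimension; this is precisely where compactness of the torus and the hypothesis $\dot{x}_i\neq 0$ combine, via the absence of non-constant holomorphic functions. Equally important, the transversality of the generic linear flow to $(x_i)_\infty$ is what forces the expansion to start with an actual pole of $x_i$ rather than with a zero or a regular value. Once these two points are secured, the Kowalevski--Painlev\'{e} conclusion follows from the parameter count above together with the fact that a weight-homogeneous autonomous system automatically admits weight-homogeneous Laurent expansions around movable singular times.
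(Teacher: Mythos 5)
You should first note that the paper does not prove this statement at all: it is quoted verbatim from Adler--van Moerbeke--Vanhaecke \cite{al_Moer3} with only the heuristic remark preceding it, so the comparison can only be with the standard argument of that reference, which your sketch indeed tries to reproduce (coordinates extend meromorphically to the abelian variety $\mathbb{T}_c$, absence of non-constant holomorphic functions forces a polar divisor for $x_i$, linear flow plus a point of that divisor gives a Laurent solution, and the count $s+(r-1)=n-1$ gives a principal balance).

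There is, however, a genuine gap at the central step. You assert that ``for an open set of initial points, the straight-line integral curve of $\mathcal{V}$ meets the smooth locus of $(x_i)_\infty$ transversally at some time $t_0$''. Nothing in your argument justifies this: a complex one-dimensional orbit $\{p+t\,V\}$ of a translation-invariant field need not meet a prescribed divisor, and proving that it does is exactly where the hypothesis of \emph{irreducibility} of the abelian variety enters --- a hypothesis you never use. The clean way to close the gap is to argue from the divisor rather than toward it: if some irreducible component $D'$ of $(x_i)_\infty$ were invariant under the flow of $\overline{\mathcal{V}}$, then $D'+B=D'$, where $B$ is the Zariski closure of $\{t\,V : t\in\mathbb{C}\}$, an abelian subvariety of $\mathbb{T}_c$; irreducibility forces $B=0$ or $B=\mathbb{T}_c$, the first contradicting $\dot{x}_i\neq 0$ (a translation-invariant field vanishing at a point vanishes identically) and the second contradicting $D'\subsetneq\mathbb{T}_c$. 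Hence no component of $(x_i)_\infty$ is flow-invariant, so at a generic smooth point $p$ of $(x_i)_\infty$ the extended field is transverse to the divisor, the integral curve through $p$ enters the affine part immediately, and expanding along it gives a Laurent solution of the original system in which $x_i$ has a pole at $t=0$; the $r-1$ coordinates of $p$ together with the $s$ values of $c$ supply the $n-1$ free parameters. (This ``start on the divisor and flow off it'' viewpoint is also the one the present paper exploits in condition 5 of Theorem \ref{liouville}.) Two smaller points: your claim that varying $c$ and the hitting point gives \emph{analytically independent} deformations of the coefficients $x^{(k)}$ deserves at least a sentence (the Laurent data determine the fiber and the limit point on the divisor, whence injectivity), and the closing appeal to weight homogeneity is extraneous, since the theorem as stated makes no such assumption.
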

Let $\mathcal{A}$ an affine variety,
$\varphi\colon \mathcal{A}\longrightarrow \mathbb{P}^N$ a regular map,
 let \smash{$\overline{\varphi(\mathcal{A} )}$} be the closure of the
image of~$\mathcal{A}$ in~$\mathbb{P}^N$.

The following theorem gives the sufficient conditions to be
satisfied by the fibers of the momentum map of an integrable system
to be algebraic complete integrable.
\begin{Theorem}[complex Liouville theorem {\cite{al_Moer3}}]\label{liouville}
Let $\mathcal{A} \in \mathbb{C}^s$ be a non-singular affine
variety of dimension $r$ which supports $r$ holomorphic vector
fields $\mathcal{V}_1,\dots,\mathcal{V}_r$ and let
$\varphi\colon \mathcal{A}\longrightarrow \mathbb{C}^N\subset
\mathbb{P}^N$ be a regular map; here $\mathbb{C}^N\subset
\mathbb{P}^N$ is the usual inclusion of $\mathbb{C}^N$ as the
complement of a hyperplane~$\mathbf{H}$ in~$\mathbb{P}^N$. We
define \smash{$\Delta:=\overline{\varphi (\mathcal{A} )}\setminus
\varphi (\mathcal{A} )$} and we decompose the analytic
subset $\Delta$ as $\Delta=\Delta'\cup \Delta''$, where
$\Delta'$ is the union of the irreducible components of $\Delta$
of dimension $r-1$ and $\Delta''$ is the union of the other
irreducible components of $\Delta$. The following conditions are
assumed to be verified:
\begin{enumerate}\itemsep=0pt
\item[$1.$] $\varphi\colon \mathcal{A}\longrightarrow \mathbb{C}^N$ is an isomorphic embedding.
 \item[$2.$] The vector fields commute pairwise,
$[\mathcal{V}_i,\mathcal{V}_j]=0$ for $1\leq i,j \leq r$.
 \item[$3.$] At every point $m\in \mathcal{A}$, the vector fields
$\mathcal{V}_1,\dots,\mathcal{V}_r$ are independent.
 \item[$4.$] The vector field $\varphi_{\ast}\mathcal{V}_1$
extends to a vector field $\overline{\mathcal{V}}_1$ which is
holomorphic on a neighborhood of $\Delta'$ in $\mathbb{P}^N$.
 \item[$5.$] The integral curves of $\overline{\mathcal{V}}_1$ that start at points $m\in \Delta'$ go immediately
into $\varphi (\mathcal{A} )$.
\end{enumerate}
Then \smash{$\overline{\varphi (\mathcal{A} )}$} is an abelian
variety of dimension $r$ and $\Delta''=\varnothing$, so that
\smash{$\overline{\varphi (\mathcal{A} )}=\varphi (\mathcal{A} )
\cup \Delta'$}. Moreover, the vector fields $\varphi_{
\ast}\mathcal{V}_1,\dots,\varphi_{\ast}\mathcal{V}_r$ extends to
holomorphic vector fields on
\smash{$\overline{\varphi (\mathcal{A} )}$}.
\end{Theorem}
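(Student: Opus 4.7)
The plan is to use the commuting vector fields, extended suitably, to construct a transitive holomorphic action of $\mathbb{C}^r$ on $\overline{\varphi(\mathcal{A})}$, which forces it to be a complex torus; being a closed subvariety of $\mathbb{P}^N$ it is then automatically abelian. Once this is established, the remaining claim that $\Delta''=\varnothing$ will follow from the standard fact that the complement of an affine open in a smooth projective variety is of pure codimension one.

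First, I would prove that the flow of $\overline{\mathcal{V}}_1$ is complete on $\overline{\varphi(\mathcal{A})}\setminus\Delta''$. On $\varphi(\mathcal{A})$ it coincides with $\varphi_\ast\mathcal{V}_1$, and condition 4 extends it holomorphically across $\Delta'$; condition 5 guarantees that an integral curve hitting $\Delta'$ re-enters $\varphi(\mathcal{A})$ instead of being trapped on the boundary. A maximal integral curve on the compact set $\overline{\varphi(\mathcal{A})}$ must therefore exist for all time, \emph{provided} it does not crash into $\Delta''$. Because $\dim\Delta''\leq r-2$ by definition of the decomposition, a generic flow line misses $\Delta''$, and a limit argument then extends completeness to every starting point in $\varphi(\mathcal{A})$.

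Second, I would extend $\overline{\mathcal{V}}_2,\dots,\overline{\mathcal{V}}_r$ by transporting the interior fields along the flow $\Phi_t$ of $\overline{\mathcal{V}}_1$. Since the $\mathcal{V}_i$ commute pairwise on $\mathcal{A}$ (condition~2), one has $(\Phi_t)_\ast\mathcal{V}_j=\mathcal{V}_j$ wherever both sides are defined; this identity, combined with the completeness just established, yields a holomorphic extension of each $\mathcal{V}_j$ to a neighborhood of every point of $\Delta'$ accessible from the interior. The same completeness argument then applies to each $\overline{\mathcal{V}}_j$, producing commuting complete holomorphic vector fields $\overline{\mathcal{V}}_1,\dots,\overline{\mathcal{V}}_r$ on $\overline{\varphi(\mathcal{A})}$ that remain pointwise linearly independent by continuity and condition~3. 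The joint flow then assembles into a holomorphic action $\mathbb{C}^r\times\overline{\varphi(\mathcal{A})}\to\overline{\varphi(\mathcal{A})}$; pointwise independence says it has discrete stabilisers, and compactness forces transitivity. Fixing a base point, the stabiliser is a discrete subgroup $\Lambda\subset\mathbb{C}^r$, and compactness of the quotient forces $\Lambda$ to be a rank-$2r$ lattice, so $\overline{\varphi(\mathcal{A})}\cong\mathbb{C}^r/\Lambda$ is a complex torus, hence an abelian variety by projectivity. Smoothness of this abelian variety together with condition 1 (which makes $\varphi(\mathcal{A})$ affine) then gives $\Delta''=\varnothing$ as noted above.

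The main obstacle will be the completeness argument of the first step: one must handle the fact that $\overline{\mathcal{V}}_1$ is not a priori defined on $\Delta''$, and show that nevertheless no maximal integral curve starting in $\varphi(\mathcal{A})$ can accumulate there. The dimension bound $\dim\Delta''\leq r-2$ is exactly the input that rules this out, and translating it into a clean statement (e.g.\ exhibiting a dense subset of starting points whose full forward and backward orbits avoid $\Delta''$, then extending completeness by the escape lemma) is the technical heart of the proof. A secondary subtlety is ensuring that the extension of $\overline{\mathcal{V}}_j$ obtained by flowing along $\Phi_t$ is single-valued near points of $\Delta'$ reached by multiple flow lines; this will follow from the identity principle for holomorphic vector fields, once one restricts to a small enough neighborhood furnished by condition~4.
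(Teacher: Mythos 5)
First, note that the paper itself contains no proof of this statement: it is quoted as a known tool from Adler--van Moerbeke--Vanhaecke \cite{al_Moer3}, so your attempt can only be measured against the standard proof given there. Your overall architecture does match that proof: use conditions 4--5 to work with the flow of $\overline{\mathcal{V}}_1$ across $\Delta'$, transport $\mathcal{V}_2,\dots,\mathcal{V}_r$ along this flow using the commutativity hypothesis (with the identity principle guaranteeing consistency), obtain a locally free holomorphic action of $\mathbb{C}^r$ whose orbits are open, and conclude that the closure is a complex torus, hence an abelian variety because it is projective.

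The genuine gap is your treatment of $\Delta''$, which is exactly the heart of the theorem. The vector fields, and hence the $\mathbb{C}^r$-action, are only constructed on $\overline{\varphi(\mathcal{A})}\setminus\Delta''$; since $\Delta''$ is a closed analytic subset, this set is compact if and only if $\Delta''=\varnothing$, which is part of what must be proved. So the steps ``compactness forces transitivity'' and ``compactness of the quotient forces $\Lambda$ to have rank $2r$'' presuppose the conclusion, and your closing deduction of $\Delta''=\varnothing$ from the fact that the complement of an affine open subset of a smooth projective variety has pure codimension one is circular: the smoothness (or normality) of $\overline{\varphi(\mathcal{A})}$ is obtained from the torus structure, which in turn needed the compactness you have not established. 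The completeness claim in your first step has the same flaw: ``a generic flow line misses $\Delta''$ because $\dim\Delta''\leq r-2$'' is not an argument ($\Delta''$ is neither invariant nor in the domain of the field, and the union of integral curves accumulating on a codimension-two set can a priori be large), and completeness does not pass to limits of initial conditions without uniform control. In \cite{al_Moer3} this is handled before any group action is invoked: flow-box charts furnished by conditions 4--5 (with a uniform time $\varepsilon$ coming from compactness of $\Delta'$) make $\varphi(\mathcal{A})\cup\Delta'$ into a smooth complex manifold, and one proves directly that integral curves starting in it cannot blow up in finite time by accumulating on the codimension-$\geq 2$ locus, so that this manifold is compact and $\Delta''=\varnothing$ falls out; only then is the $\mathbb{C}^r$-action used to identify it with a torus. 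Until you supply an argument of this kind, the proposal proves the easy part of the statement and leaves the essential one open.
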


\section[Algebraic integrability of the a\_4\^{}(2) Toda lattice]{Algebraic integrability of the $\boldsymbol{a_4^{(2)}}$ Toda lattice}\label{sec3}

The aims of this section is to prove, by following the strategy
developed by Adler, van~Moerbeke and Vanhaecke in \cite[Section~9.4]{al_Moer3} to establish the algebraic integrability of
the \smash{$a_2^{ (1 )}$} Toda lattice, the algebraic complete
integrability of the~\smash{$a_4^{(2)}$} Toda lattice.

\subsection[Liouville integrability of the a\_4\^{}(2) Toda lattice]{Liouville integrability of the $\boldsymbol{a_4^{(2)}}$ Toda lattice}

The differential equations of the periodic \smash{$a_4^{(2)}$}
Toda lattice
 are given on the five dimensions hyperplane
 $\mathcal{H} =\big\{( x_0, x_1, x_2, y_0, y_1, y_2 )\in
\mathbb{C}^6 \mid y_0 + 2 y_1 + 2y_2 = 0 \big\}$ of $\mathbb{C}^6$ by
\[
 \begin{cases}
 \dot{x}=x.y,\\
 \dot{y}=Ax,
 \end{cases}\]
where $x=(x_0,x_1,x_2)^{\mathsf{T}}$,
$y=(y_0,y_1,y_2)^{\mathsf{T}}$ and $A$ is the Cartan matrix
of the twisted affine Lie algebra \smash{$a_4^{(2)}$} given in
\cite{al_Moer3} by
\[
 \begin{pmatrix}
 \hphantom{-} 2 & -2 & \hphantom{-}0 \\
 -1 & \hphantom{-}2 & -2 \\
 \hphantom{-}0 & -1 & \hphantom{-}2 \\
 \end{pmatrix}
\]
and $\varepsilon=(1,2,2)^{\mathsf{T}}$ is the
normalized null vector of $A^{\mathsf{T}}$.
 The equations of motion of the \smash{$a_4^{(2)}$} Toda lattice are given
 in \cite{al_Moer3} by
\begin{alignat}{3}
& \dot{x}_0=x_0y_0, \qquad && \dot{y}_0=2x_0-2x_1, & \nonumber\\
& \dot{x}_1=x_1y_1, \qquad && \dot{y}_1=-x_0+2x_1-2x_2, &\nonumber\\
& \dot{x}_2=x_2y_2, \qquad && \dot{y}_2=-x_1 + 2x_2. & \label{systa4}
\end{alignat}
We denote by $\mathcal{V}_1$ the vector field defined by the above
differential equations \eqref{systa4}. Then $\mathcal{V}_1$ is the
Hamiltonian vector field, with Hamiltonian function $ F_2=y_0^2
+ 4y_2^2-4x_0-8x_1-16x_2 $ with respect to the Poisson
structure $\{\cdot,\cdot\}$ defined by the following
skew-symmetric matrix:
\begin{equation}\label{matrij2}
 J=\frac{1}{8}\begin{pmatrix}
 0 & 0 & 0 & 4x_0 & -2x_0 & 0 \\
 0 & 0 & 0 & -2x_1 & 2x_1 & -x_1 \\
 0 & 0 & 0 & 0 & -x_2 & x_2 \\
 -4x_0 & 2x_1 & 0 & 0 & 0 & 0 \\
 2x_0 & -2x_1 & x_2 & 0 & 0 & 0 \\
 0 & x_1 & -x_2 & 0 & 0 & 0
 \end{pmatrix}.
\end{equation}
This Poisson structure is given on $\mathbb{C}^6$; the function
$F_0=y_0 + 2y_1 + 2y_2$ is a Casimir, so that the hyperplane
$\mathcal{H}$ is a Poisson subvariety. The rank of this Poisson
structure $\{\cdot,\cdot\}$ is $0$ on the three-dimensional
subspace $\{x_0 = x_1 = x_2 = 0\}$; the rank is $2$ on the three
four-dimensional subspaces: $\{x_0 = x_1 = 0\}$, $\{x_0 = x_2 =
0\}$ and $\{x_1 = x_2 = 0\}$. Thus, for all points of
$\mathcal{H}$ except the four subspaces above the rank is $4$.
The vector field $\mathcal{V}_1$ admits also the following three
constants of motion:
\begin{gather}
 F_1=x_0x_1^2x_2^2, \nonumber\\
 F_2=y_0^2 + 4y_2^2-4x_0-8x_1-16x_2, \nonumber\\
 F_3= \bigl(y_0^2-4x_0\bigr)\bigl(y_2^2-4x_2\bigr)-4x_1(y_0y_2-4x_2-x_1).\label{invar1a4}
\end{gather}
$F_1$ is a Casimir for $\{\cdot,\cdot\}$, and the function
$F_3$ generates a second Hamiltonian vector field
$\mathcal{V}_2$, which commutes with $\mathcal{V}_1$, given by
the differential equations
\begin{gather}
x_0^{\prime}=x_0y_2(y_0y_2-2x_1)-4x_0x_2y_0, \nonumber\\
x_1^{\prime}=-x_1y_1y_2(y_1+y_2) - x_1^2y_1+x_1(x_0y_2+2x_2y_0), \nonumber\\
x_2^{\prime}=x_2(y_1+y_2)((y_1+y_2)y_2+x_1)+x_0x_2y_0, \nonumber\\
y_0^{\prime}=2\bigl(2x_1x_2+x_0y_2^2\bigr)+ x_1(2x_1-y_0y_2)-8x_0x_2, \nonumber\\
y_1^{\prime}=-x_0y_2^2+2x_2 ( 3x_0-x_1 ) +y_0y_2 (
x_1+x_2 )-2x_1^2+x_2y_0y_1, \nonumber\\
y_2^{\prime}=x_1y_2( y_1+y_2)+x_1^2-x_2( y_1+y_2)
-2x_2x_0. \label{syst3}
\end{gather}
Hence, the system \eqref{systa4} is completely integrable in the
Liouville sense. It can be written as a~Hamiltonian vector fields
\[\dot{z}=J\frac{\partial H}{\partial z},\qquad z=(z_1,\dots,z_6)^{\mathsf{T}}=( x_0
, x_1, x_2, y_0, y_1, y_2 )^{\mathsf{T}},\]
where $H=F_2$. The
Hamiltonian structure is defined by the following Poisson bracket:
\[ \{F,H\}=\biggl\langle\frac{\partial F}{\partial z}, J\frac{\partial H}
{\partial z}\biggr\rangle=\displaystyle\sum_{i,k=1}^6
J_{ik}\frac{\partial F}{\partial z_i}\frac{\partial H}{\partial
z_k},
\] where $\frac{\partial H}{\partial z}=\bigl(\frac{\partial
H}{\partial x_0},\frac{\partial H}{\partial x_1},\frac{\partial
H}{\partial x_2},\frac{\partial H}{\partial y_0},\frac{\partial
H}{\partial y_1},\frac{\partial H}{\partial y_2}\bigr)^{\mathsf{T}}$ and
$J$ is an
antisymmetric matrix.

The vector field $\mathcal{V}_2$ admits the same constants of
 motion \eqref{invar1a4} and is in involution with $\mathcal{V}_1$ therefore $\{F_2,F_3\}=0$. The involution $\sigma$ defined on $\mathbb{C}^6$ by
 \[\sigma(x_0,x_1,x_2,y_0,y_1,y_2)=(x_0,x_1,x_2, -y_0,-y_1,-y_2)
 \]
 preserves the constants of motion $F_1$, $F_2$ and $F_3$, hence leave the fibers of the momentum map~$F$ invariant. This involution can be restricts to the hyperplane $\mathcal{H}$.

Let $\mathbf{F}=(F_1,F_2,F_3)\colon \mathcal{H} \rightarrow
\mathbb{C}^3$ be the momentum map; functions $\mathbf{F}_i$ being
two by two in involution, $\mathbf{F}$ is involutive. The Jacobian
matrix of $\mathbf{F}$ is given by
\begin{gather*}
\operatorname{Jac}:=\begin{pmatrix}
 x_1^2x_2^2 & 2x_0x_1x_2^2 & 2x_0x_1^2x_2 & 0& 0 \\
 -4 & -8 & -16 & 2y_0 & 8y_2 \\
 16x_2-4y_2^2 & \bigstar & -4y_0^2 & \blacklozenge &\vartriangle
 \end{pmatrix}\\
 \hphantom{\operatorname{Jac}:=}{}\text{with}
 \begin{cases}
 \bigstar=-4y_0y_2+16x_2+8x_1, \\
 \blacklozenge=2y_0\bigl(y_2^2-4x_2\bigr)-4x_1y_2, \\
 \vartriangle= 2y_2\bigl(y_0^2-4x_0\bigr)-4x_1y_0.
 \end{cases}
 \end{gather*}
 Let $p_0\bigl(1,1,1,1,-\frac{3}{2},1\bigr)$ be
a point of $\mathcal{H}$. The Jacobian matrix of $\mathbf{F}$ at
$p_0$ is given by
\[\operatorname{Jac}(p_0)=\begin{pmatrix}
 \hphantom{-}1 & \hphantom{-}2 & \hphantom{-}2 & \hphantom{-}0 & \hphantom{-}0 \\
 -4 & -8 & -16 & \hphantom{-}2 & \hphantom{-}8 \\
 \hphantom{-}12 & \hphantom{-}20 & \hphantom{-}28 & -10 & -10
 \end{pmatrix}.\]
The rank of matrix $\operatorname{Jac}(p_0)$ is $3$, so the
differentials ${\rm d}\mathbf{F}_i$, $i=1,2,3$, are
independent at $p_0$ and since the functions $\mathbf{F}_i$ are
polynomials, $\mathbf{F}$ is independent on a dense open subset
$\mathcal{U}_{\mathbf{F}}$ of $\mathcal{H}$.
We~can deduce that $(\mathcal{H},\{\cdot,\cdot\},\mathbf{F})$ is an integrable system in the sense of Liouville.

Let $\mathcal{S}$ be the set of points in $\mathcal{H}$ where the
determinants of all $3 \times 3$ minors of the matrix $\operatorname{Jac}$ cancel.
By direct computation, we prove that $S$ is the union of following
subvarieties:
\begin{gather*}
\mathcal{S}_1:= \{x_1=0\},\qquad \mathcal{S}_2:= \{x_2=0\},\qquad \mathcal{S}_3:=\big\{x_1=2y_2^2,\,x_0=4x_2,\,y_0=4y_2\big\},\\
\mathcal{S}_4:= \biggl\{y_0=y_2=0,\,x_1=\frac{2x_2^2+x_0x_2+ \alpha_1}{x_0},\,x_1=\frac{2x_2^2+x_0x_2-\alpha_1}{x_0}\biggr\},\\
\mathcal{S}_5:= \biggl\{x_0=\frac{1}{4}\beta,\, x_2=\frac{1}{2}\frac{y_2\bigl(4x_1y_2-2x_1y_0-2y_0y_2^2+y_0^2y_2\bigr)}{y_0(-4y_2+y_0)}\biggr\}
\end{gather*}
 with
\begin{gather*}
\alpha_1=\sqrt{4x_2^4+4x_2^3x_0-7x_0^2x_2^2+2x_2x_0^3},\\
\beta=\frac{8x_1y_0y_2-8x_1y_2^2+4y_0y_2^3-4y_0^2y_2^2-2x_1y_0^2+y_0^3y_2}{y_2(-4y_2+ y_0 )}.
\end{gather*}

 The images under $\mathbf{F}$ of $S_1$ and $S_2$ are contained in the subset $ F_1=c_1=0.$
 By substituting $x_1=\frac{1}{8}y_0^2$, $x_2=\frac{1}{4}x_0$ and $y_0=4y_2$ in the three constants
of motion $F_i=c_i$, $i=1,2,3$, by direct computation
with \textsc{Maple}, we obtain
\[
 c_1=\dfrac{1}{4}x_0^3y_2^4, \qquad
 c_2=4y_2^2-8x_0, \qquad
 c_3=4x_0\bigl(-3y_2^2+x_0\bigr).
\]
Using $F_i=c_i$, $i=1,2,3$, and eliminating the variables
$x_i$, $y_i$, $i=1,2,3$, in the above expresions, we deduce that
the image under $F$ of $S_3$ is contained in the subset
\[256\bigl(3200000c_1^2+2000c_3^2c_2c_1-225c_3c_2^3c_1
+c_3^5\bigr)+1728c_2^5c_1 -32c_3^4c_2^2+c_3^3c_2^4=0,\] and the set
of regular values of the momentum map $\mathbf{F}$ is the Zariski
open subset $\Omega$ defined by
\begin{gather*}
 \Omega = \big\{c=(c_1,c_2,c_3) \in \mathbb{C}^3 \mid c_1\neq 0 \ \text{and} \\ \hphantom{\Omega = \big\{}{}
 256\bigl(3200000c_1^2+2000c_3 ^2c_2c_1-225c_3c_2^3c_1+c_3^5\bigr)+1728c_2^5c_1-32c_3^4c_2^2+c_3^3c_2^4\neq 0\big\}.
\end{gather*}
At a generic point $c=(c_1,c_2,c_3) \in \mathbb{C}^3$,
the fiber on $c\in \Omega$ of $\mathbf{F }$ is therefore
\[\mathbb{F}_c:= \mathbb{F}^{-1}(c)=\bigcap_{i=1}^3
\{m\in \mathcal{H}\mid \mathbf{F}_i(m)=c_i\}.\]
Hence, we have the following result which prove that
\smash{$a_4^{(2)}$} Toda lattice is a completely integrable
system in the Liouville sense.

\begin{Proposition}
For $c\in \Omega$, the fiber $\mathbb{F}_c$ over $c$ of the momentum
$F$ is a smooth affine variety of dimension $2$ and the rank of the
Poisson structure \eqref{matrij2} is maximal and equal to $4$ at
each point of $\mathbb{F}_c$; moreover, the vector fields
$\mathcal{V}_1$ and $\mathcal{V}_2$ are independent at each point
of the fiber $\mathbb{F}_c$.
\end{Proposition}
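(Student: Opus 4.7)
The plan is to deduce the three claims directly from the construction of $\Omega$ and from the rank analysis of $J$ already recalled above. For the smoothness and dimension assertion, I note that $\mathcal{H}$ has dimension $5$ and that the momentum map $\mathbf{F}=(F_1,F_2,F_3)$ takes values in $\mathbb{C}^3$, so the implicit function theorem implies that above any regular value the fiber is smooth of dimension $5-3=2$. By the very construction of $\Omega$ as the complement of $\mathbf{F}(\mathcal{S}_1\cup\cdots\cup\mathcal{S}_5)\cup\{c_1=0\}$, every $m\in\mathbb{F}_c$ with $c\in\Omega$ lies outside $\mathcal{S}$, so the Jacobian $\operatorname{Jac}(m)$ has maximal rank $3$, yielding the first claim.

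For the maximality of the Poisson rank on $\mathbb{F}_c$, the earlier analysis of~\eqref{matrij2} shows that the rank of $J$ on $\mathcal{H}$ fails to be $4$ only on the union $\{x_0=x_1=0\}\cup\{x_0=x_2=0\}\cup\{x_1=x_2=0\}$. Since $c\in\Omega$ in particular forces $c_1\neq 0$, the identity $F_1=x_0x_1^2x_2^2=c_1$ holding on $\mathbb{F}_c$ prevents any of $x_0,x_1,x_2$ from vanishing, so $\mathbb{F}_c$ is disjoint from all these loci and the rank of $J$ is exactly $4$ at every point of $\mathbb{F}_c$.

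For the independence of $\mathcal{V}_1=\mathcal{X}_{F_2}$ and $\mathcal{V}_2=\mathcal{X}_{F_3}$, I would combine the previous two steps as follows. Since $\dim\mathcal{H}=5$ and $J_m$ has rank $4$ at each $m\in\mathbb{F}_c$, the kernel of $J_m$ inside $T^*_m\mathcal{H}$ is one-dimensional; because $F_1$ is a Casimir, this kernel is spanned by ${\rm d}F_1(m)$, which is nonzero on $\mathbb{F}_c$ since $x_0x_1x_2\neq 0$ there. If $a\mathcal{V}_1(m)+b\mathcal{V}_2(m)=0$ for some $m\in\mathbb{F}_c$ and $a,b\in\mathbb{C}$, then $J_m(a\,{\rm d}F_2+b\,{\rm d}F_3)=0$, whence $a\,{\rm d}F_2(m)+b\,{\rm d}F_3(m)=\lambda\,{\rm d}F_1(m)$ for some $\lambda\in\mathbb{C}$; the linear independence of ${\rm d}F_1(m),{\rm d}F_2(m),{\rm d}F_3(m)$, which is precisely the regularity condition of the first step, then forces $a=b=\lambda=0$. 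The only point requiring a minor verification is the identification of the kernel of $J_m$ on $\mathcal{H}$ with $\mathbb{C}\cdot{\rm d}F_1(m)$; this reduces to checking from~\eqref{matrij2} that $({\rm d}F_1)^{\mathsf{T}}J=0$ together with a rank count, and apart from this bookkeeping the proof is immediate.
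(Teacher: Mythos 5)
Your proposal is correct and follows essentially the same route as the paper, which states this Proposition as a direct consequence of the preceding computations: $\Omega$ is the set of regular values of $\mathbf{F}$ (giving smoothness and dimension $2$), and $c_1=x_0x_1^2x_2^2\neq 0$ keeps $\mathbb{F}_c$ away from the loci where the rank of \eqref{matrij2} drops below $4$. Your kernel argument identifying $\ker J_m$ with $\mathbb{C}\,{\rm d}F_1(m)$ to deduce the independence of $\mathcal{V}_1$ and $\mathcal{V}_2$ is the standard way to make explicit what the paper leaves implicit, and it is sound.
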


\begin{Proposition}
$(\mathcal{H}, \{\cdot,\cdot\},\mathbf{F})$ is a
completely integrable system
 describing the $a_4^{(2)}$ Toda lattice, where
$\mathbf{F}=(F_1,F_2,F_3)$ and $\{\cdot,\cdot\}$ are
given respectively by \eqref{invar1a4} and \eqref{matrij2} with
commuting vector fields \eqref{systa4} and \eqref{syst3}.
\end{Proposition}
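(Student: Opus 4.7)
The plan is to assemble this proposition from the pieces already established in the preceding discussion; it is really a packaging statement, and my approach will be to match the structure $(\mathcal{H},\{\cdot,\cdot\},\mathbf{F})$ against the definition of Liouville integrability on a Poisson manifold of dimension $5$ whose generic symplectic leaves have dimension $4$. First, I would observe that $\mathcal{H}$ is a genuine Poisson subvariety of $\mathbb{C}^6$: the cutting function $F_0=y_0+2y_1+2y_2$ is a Casimir of \eqref{matrij2}, so the bracket restricts to $\mathcal{H}$, and the rank analysis carried out above shows that the restriction has maximal rank $4$ on the complement of the four exceptional subspaces, a dense open subset of $\mathcal{H}$.

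Second, I would verify the three defining conditions of a completely integrable system. Functional independence has already been checked by exhibiting the point $p_0=\bigl(1,1,1,1,-\tfrac{3}{2},1\bigr)$ where $\operatorname{Jac}(p_0)$ has rank $3$; since the rank-drop locus of the Jacobian of a polynomial map is Zariski-closed and we have displayed a point outside it, the differentials ${\rm d}F_1, {\rm d}F_2, {\rm d}F_3$ are independent on a dense open subset $\mathcal{U}_{\mathbf{F}}\subset \mathcal{H}$. For involutivity, $F_1$ is a Casimir, so $\{F_1,F_2\}=\{F_1,F_3\}=0$ come for free; the only nontrivial relation is $\{F_2,F_3\}=0$, which is equivalent to the assertion $\mathcal{V}_1[F_3]=0$ (or, symmetrically, $\mathcal{V}_2[F_2]=0$) and can be confirmed by substituting \eqref{systa4} and \eqref{syst3} into ${\rm d}F_3$ and ${\rm d}F_2$ respectively and noting that the resulting polynomial is identically zero—a direct \textsc{Maple} computation. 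Finally, the dimension count matches: on a Poisson manifold of dimension $5$ with generic rank $4$ one needs exactly $5-2=3$ independent functions in involution, and $\mathbf{F}=(F_1,F_2,F_3)$ provides them.

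I do not expect a genuine obstacle: the vector fields $\mathcal{V}_1$ and $\mathcal{V}_2$ associated to the Hamiltonians $F_2$ and $F_3$ have already been written down explicitly in \eqref{systa4} and \eqref{syst3}, and their commutation is the Hamiltonian translation of the identity $\{F_2,F_3\}=0$. The most tedious step is the verification of this Poisson commutation, which is a mechanical but lengthy polynomial check; once it is accepted, the proposition follows immediately by citing the definition of Liouville integrability together with the Poisson-subvariety property of $\mathcal{H}$, the rank computation, the involutivity of $\mathbf{F}$, and the independence at $p_0$.
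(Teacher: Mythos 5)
Your proposal is correct and follows essentially the same route as the paper: the paper likewise treats this proposition as a summary of the preceding computations, namely that $F_1$ is a Casimir, that $F_3$ generates $\mathcal{V}_2$ commuting with $\mathcal{V}_1=\mathcal{X}_{F_2}$ so that $\{F_2,F_3\}=0$, that the Poisson structure has generic rank $4$ on $\mathcal{H}$, and that $\operatorname{Jac}(p_0)$ has rank $3$, giving independence of ${\rm d}F_1,{\rm d}F_2,{\rm d}F_3$ on a dense open subset and hence Liouville integrability with the correct count $3=5-\tfrac{1}{2}\cdot 4$. Nothing essential differs from the paper's argument.
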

\subsection[Algebraic integrability of a\_4\^{}(2) Toda lattice]{Algebraic integrability of $\boldsymbol{a_4^{(2)}}$ Toda lattice}

To show that the $a_4^{(2)}$ Toda lattice is
algebraically completely integrable, we show that for all
$c=(c_1,c_2,c_3)\in \Omega$, the fiber $\mathbf{F_c}$ is
the affine part of an abelian surface on which the two vector fields
$\mathcal{V}_1$ and $\mathcal{V}_2$ are linear \cite{Dehainsala}.
For that it is necessary to verify that $\mathbf{F_c}$
satisfies the conditions of the Liouville complex theorem (Theorem~\ref{liouville}).

 Observe that the vector field $\mathcal{V}_1$ is homogeneous with respect to the following weights:
\[\varpi(x_0,x_1,x_2,y_0,y_1,y_2)=(2,2,2,1,1,1).\]
We can also verify that the constants of motion $F_i$ are weight
homogeneous and they have the following weights:
$\varpi(F_1,F_2,F_3)=(10,2,4).$
\subsubsection{Laurent solutions}

According to \cite{al_Moer3}, the vector fields of an algebraic complete integrable system
have good properties at infinity; after all, a linear vector field
on a complex torus is the same along the divisor, which will happen
to be absent in phase space, as on the rest of the torus. In fact,
since every holomorphic function on a complex torus can be written
as a quotient of theta functions, the integral curves (solutions) to
any of the vector fields of an algebraic complete integrable system can be written as a
quotient of holomorphic functions. Intuitively speaking this means
that we can consider not only Taylor solutions to the differential
equations that describe these vector fields but also Laurent
solutions, which will correspond to initial conditions at infinity (precisely: on the divisor that needs to be adjoined to the fibers of
the momentum map to complete them into abelian varieties). Moreover,
these Laurent solutions must depend on $\dim \mathcal{H}-1$
free parameters, which corresponds to the freedom of choice of the
initial condition at infinity.

Let us look for weights homogeneous Laurent solutions associated to
the vector field $\mathcal{V}_1$. The~solution form of vector field
$\mathcal{V}_1$ is
\[x_i(t)=\frac{1}{t^{2}}\sum_{k=0}^\infty x_i
^{(k)}t^k \qquad \text{and} \qquad
y_i(t)=\frac{1}{t}\sum_{k=0}^\infty
y_i^{(k)}t^k, \qquad i=0,1,2,\]
that is to say
that $\varpi(x_i)=2\varpi(y_i)=2$ for $i=0,1,
$2. By substituting these solutions into the differential equations
\eqref{systa4} associated with the vector field $\mathcal{V}_1$, the
indicial locus is the subset of~$\mathcal{H}$ given according to
\eqref{equaindicial} by
\begin{gather*}
 0 = x_0^{(0)}\bigl(2+y_0^{(0)}\bigr), \\
 0 = x_1^{(0)}\bigl(2+y_1^{(0)}\bigr),\\
 0 = x_2^{(0)}\bigl(2+y_2^{(0)}\bigr), \\
 0 = y_0^{(0)}+2x_0^{(0)}-2x_1^{(0)},\\
 0 = y_1^{(0)}-x_0^{(0)}+2x_1^{(0)}-2x_2^{(0) } ,\\
 0 = y_2^{(0)}-x_1^{(0)}+2x_2^{(0)}.
\end{gather*}
These equations are easily solved and they yield the following
(non-zero) solutions:
 \begin{alignat*}{3}
& m_0= (1,0,0,-2,1,0), \qquad && m_3= (0,4,3,8,-2,-2), &\\
& m_1= (0,1,0,2,-2,1), \qquad && m_4= (1,0,1,-2,3,-2), &\\
& m_2= (0,0,1,0,2,-2), \qquad && m_5= (4,3,0,-2,-2,3).&
 \end{alignat*}
 The Kowalevski matrix at an arbitrary point
$\bigl(x^{(0)},y^{(0)}\bigr)$,
 solution of the indicial equation is given by
 \[\mathcal{K}\bigl(x^{(0)},y^{(0)}\bigr)
 =\begin{pmatrix}
 2+ y_0^{(0)}& 0 & 0 & x_0^{(0)} & 0 \\
 0 & 2- \frac{1}{2}y_0^{(0)}-y_2^{(0)}& 0 &-\frac{1}{2}x_1^{ (0)} & -x_1^{(0)} \\[1mm]
 0 & 0 & 2+ y_2^{(0)} & 0 & x_2^{(0)}\\
 2 & -2 & 0 & 1 & 0 \\
 0 & -1 & 2 & 0 & 1 \\
 \end{pmatrix}.\]

 \begin{Lemma}
The system of differential equation \eqref{systa4} of the vector
field $\mathcal{V}_1$ has three distinct families of homogeneous
Laurent solutions with weights depending on four $(\dim\mathcal{H}-1)$ free parameters.
\end{Lemma}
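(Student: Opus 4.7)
The plan is to apply the standard Painlev\'e criterion to each of the six indicial points $m_0,\dots,m_5$ listed above: the maximum number of free parameters in a formal Laurent series starting at $m_i$ equals the number of non-negative integer eigenvalues of the Kowalevski matrix $\mathcal{K}(m_i)$, provided the compatibility conditions at every resonant level are satisfied. A principal balance corresponds to the maximal count $\dim\mathcal{H}-1=4$.

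First, I would substitute each $m_i$ into the explicit $5\times 5$ matrix $\mathcal{K}\bigl(x^{(0)},y^{(0)}\bigr)$ displayed above and compute its characteristic polynomial. The three indicial points $m_3$, $m_4$, $m_5$ are precisely those with exactly two of the $x_i^{(0)}$ nonzero; geometrically, each corresponds to a single exponential interaction of the Toda lattice blowing up at $t=0$, matching the expected three components of the divisor at infinity. For each of these, I expect the spectrum of $\mathcal{K}(m_i)$ to consist of the trivial eigenvalue $-1$ (reflecting time-translation invariance) together with four non-negative integer eigenvalues $k_1,\dots,k_4$. The remaining $m_0$, $m_1$, $m_2$, which have only one of the $x_i^{(0)}$ nonzero, should have spectra with too few non-negative integer eigenvalues and hence yield only lower balances.

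Second, for each of $m_3,m_4,m_5$ I would run the recursion $(k\operatorname{Id}-\mathcal{K})x^{(k)}=P^{(k)}$ stepwise up to $k=\max_j k_j$, verifying at every resonance $k=k_j$ that the compatibility condition $P^{(k)}\in\operatorname{Im}\bigl(k\operatorname{Id}-\mathcal{K}(m_i)\bigr)$ is automatically satisfied, which introduces the $j$-th free parameter as the component of $x^{(k)}$ along the corresponding eigenvector. Combined with the three indicial families, this produces three four-parameter formal solutions, whose convergence follows from the general theorem on weight-homogeneous systems (\cite[Proposition~7.6]{al_Moer3}).

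The main obstacle is the resonance compatibility verification: at each positive integer eigenvalue $k_j$ one must check a nontrivial polynomial identity among the previously computed coefficients $x_i^{(l)}$, $l<k_j$, and any failure would reduce the dimension of the family. These identities are forced by the weight-homogeneity of the constants of motion $F_1,F_2,F_3$ (of weights $10$, $2$, $4$) restricted to $\mathbb{F}_c$, so they should hold identically in the four free parameters and in $c=(c_1,c_2,c_3)$, but verifying this explicitly at each of the three indicial points $m_3,m_4,m_5$ is the computational bulk of the argument.
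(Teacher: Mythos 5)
There is a genuine error in your plan: you have the roles of the indicial points exactly reversed. The four-parameter principal balances of $\mathcal{V}_1$ are carried by the points $m_0$, $m_1$, $m_2$ --- those with a \emph{single} nonzero leading coefficient $x_i^{(0)}$ --- and not by $m_3$, $m_4$, $m_5$. The paper computes the Kowalevski spectra precisely at $m_0,m_1,m_2$ and finds in each case the eigenvalue $-1$ together with four non-negative integer eigenvalues, namely $\{1,2,2,3\}$ at $m_0$, $\{1,2,3,4\}$ at $m_1$ and $\{1,2,2,4\}$ at $m_2$, and then writes down the corresponding four-parameter Laurent series \eqref{balancemainm0}, \eqref{balancemainem1}, \eqref{balancemainem2} explicitly; it is the points $m_3,m_4,m_5$ (two nonzero $x_i^{(0)}$) that give only lower balances depending on $3$ free parameters, as stated immediately after the lemma. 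Your heuristic is also inverted on its own terms: a point with two nonzero leading $x$-coefficients corresponds to \emph{two} exponential interactions blowing up simultaneously, which imposes more constraints and hence fewer free parameters, while the single-blow-up points $m_0,m_1,m_2$ are the ones matching the three irreducible components of the divisor at infinity. If you ran your proposed computation at $m_3,m_4,m_5$ you would not find three four-parameter families, and the lemma would not be established.

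Apart from this misidentification, your general method is the same as the paper's: evaluate the $5\times 5$ Kowalevski matrix at each solution of the indicial locus, read off the number of non-negative integer eigenvalues, and build the weight-homogeneous Laurent series step by step, with the free parameters entering at the resonances. (The paper in fact presents only the outcome --- the characteristic polynomials at $m_0,m_1,m_2$ and the leading terms of the three balances computed with \textsc{Maple} --- and does not spell out the resonance-compatibility verifications you describe, so that part of your plan is a reasonable, if more labor-intensive, elaboration; but it must be carried out at $m_0$, $m_1$, $m_2$.)
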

\begin{proof}

The characteristic polynomial of the Kowalevski matrix at $m_0$ is
given by
 \[\mathcal{X}( \lambda;m_0)= ( \lambda-1 ) ( \lambda-3 ) ( \lambda+1
 )( \lambda-2 ) ^{2}.\]

 It admits $4$ non-negative eigenvalues which leads
to a Laurent solution depending on four free parameters, whose the
five leading terms (going with steps $1$, $2$, $2$, $3$, respectively, are
denoted by $a$, $c$, $d$ and $e$). The first four terms of this balance
which we note $x(t,m_0)$:
\begin{gather}
 x_0(t,m_0) = \frac{1}{t^2}+c-\frac{1}{2}ed+O\bigl(t^2\bigr), \nonumber\\
 x_1(t,m_0) = et+O\bigl(t^2\bigr), \nonumber\\
 x_2(t,m_0) = d-adt+O\bigl(t^2\bigr), \nonumber\\
 y_0(t,m_0) = -\frac{2}{t}+2ct-\frac{3}{2}et^2+O\bigl(t^3\bigr), \nonumber\\
 y_1(t,m_0) = \frac{1}{t}+a-(c+2d)t+\biggl(ad+\frac{5}{4}e \biggr)t^2+O\bigl(t^3\bigr),\nonumber\\
 y_2(t,m_0) = -a+2dt-\biggl(ad+\frac{1}{2}e\biggr)t^2+O\bigl(t^3 \bigr).\label{balancemainm0}
\end{gather}

The characteristic polynomial of the Kowalevski matrix at $m_1$
 is given by
 \[\mathcal{X}( \lambda;m_1)= ( \lambda-4 ) ( \lambda-1 ) (
 \lambda-2 )( \lambda-3 )( \lambda+1 ).
\]
It admits 4 non-negative eigenvalues which leads
to a Laurent solution depending on four free parameters, whose the
five leading terms (going with steps $1$, $2$, $3$, $4$, respectively, are
denoted by $a$, $c$, $d$ and $e$). The first four terms of this balance
which we note $x(t,m_1)$:
\begin{gather}
 x_0(t,m_1) = et^2 +O\bigl(t^3\bigr), \nonumber\\
 x_1(t,m_1) = \frac{1}{t^2}+c-\frac{1}{2}d t+\frac{1}{10}\bigl( 6c^2+ad-e\bigr)t^2+O\bigl(t^3\bigr), \nonumber\\
 x_2(t,m_1) = dt-\frac{1}{2}a d t^2+O\bigl(t^3\bigr), \nonumber\\
 y_0(t,m_1) = \frac{2}{t}+a-2ct+\frac{1}{2}d t^2+\frac{1}{15} \bigl(11e-6c^2-ad\bigr)t^3+O\bigl(t^4\bigr), \nonumber\\
 y_1(t,m_1) = -\frac{2}{t}+2ct-\frac{3}{2}d t^2+\frac{2}{5} \bigl(c^2+ad-e\bigr)t^3+O\bigl(t^4\bigr), \nonumber\\
 y_2(t,m_1) = \frac{1}{t}-\frac{1}{2}a-c t+\frac{5}{4}dt^2+\frac{ 1}{30}\bigl(e-6c^2-11ad\bigr)t^3+O\bigl(t^4\bigr). \label{balancemainem1}
\end{gather}
The characteristic polynomial of the Kowalevski matrix at $m_2$
 is given by
 \[\mathcal{X}( \lambda;m_2)= ( \lambda+1 ) ( \lambda-4 ) ( \lambda-1) ( \lambda-2 ) ^{2}.
\]It admits 4 non-negative eigenvalues which leads
to a Laurent solution depending on four free parameters, whose the
five leading terms (going with steps $1$, $2$, $2$, $4$, respectively, are
denoted by $a$, $c$, $d$ and $e$). The first four terms of this balance
which we note $x(t,m_2)$:
\begin{gather}
 x_0(t,m_2) = c+act+\frac{1}{2}c\bigl(2c+a^2\bigr)t^2+O\bigl(t^3 \bigr), \nonumber\\
 x_1(t,m_2) = et^2+O\bigl(t^3\bigr), \nonumber\\
 x_2(t,m_2) = \frac{1}{t^2}+d+\frac{1}{10}\bigl(6d^2-e\bigr)t^2 +O\bigl(t^3\bigr), \nonumber\\
 y_0(t,m_2) = a+2ct+act^2+\frac{1}{3}\bigl(2c^2+a^2c-2e\bigr)t^3 +O\bigl(t^4\bigr), \nonumber\\
 y_1(t,m_2) = \frac{2}{t}-\frac{1}{2}a-(c+2d)t-\frac{1 }{2}act^2-\frac{1}{30}\bigl(10c^2+5a^2c+12d^2-22e\bigr)t^3+O\bigl(t^4\bigr), \label{balancemainem2}\\
 y_2(t,m_2) = -\frac{2}{t}+2dt+\frac{2}{5}\bigl(d^2-e\bigr)t^3+ O\bigl(t^4\bigr).
\tag*{\qed}
\end{gather}
\renewcommand{\qed}{}
\end{proof}

According to Adler and van Moerbeke \cite{al_Moer4}, since the Toda
problem is algebraic integrable, \cite[Theorem~1]{al_Moer2}~implies the existence of a coherent tree of Laurent solutions,
satisfying all the conditions described in \cite{al_Moer3}. The
existence of a~coherent tree of Laurent solutions is crucial for
obtaining the complete structure of the divisor $\mathcal{D}$ at
infinity in \cite{al_Moer4}. In this work, we do not use a~coherent
tree of Laurent solutions, but only the so-called principal balances
depending on $4$~free parameters corresponding to $m_0$, $m_1$, $m_2$
and that $m_3$, $m_4$, $m_5$ correspond to lower balances depending on
$3$ free parameters.

\subsubsection[Painlev\'e divisors of a\_4\^{}(2) Toda lattice]{Painlev\'e divisors of $\boldsymbol{a_4^{(2)}}$ Toda lattice}

We now search the formal Painlev\'e divisors, i.e., the algebraic
curves defined by the three different principal balances
$x(t,m_i)_{i=0,1,2}$, confined to a fixed affine invariant surface
$\mathbb{F}_c$, $c \in \Omega$. We have the following assertion.

\begin{Proposition}\label{proposition3.4}
For the Laurent solution $x(t;m_0)$ restricted to the
invariant surface, for $c \in \Omega$, the Painlev\'e divisor
\smash{$\Gamma_c^{(0)}$} is a smooth genus three hyperelliptic
curve. It is given by
\begin{gather*}
 \Gamma_c^{(0)}\colon \ 16d^2a^8-\bigl(256d^3+8d^2c_2\bigr)a^6+ \bigl(1536d^2+96dc_2+8c_3+c_2^2\bigr)d^2a^4 \\
 \hphantom{\Gamma_c^{(0)}\colon} \ {}-\bigl(\bigl(8\bigl(8c_3+48dc_2+c_2^2+512d^2\bigr)d+2c_2c_3\bigr)d^2+64c_1\bigr)a^2 \\
 \hphantom{\Gamma_c^{(0)}\colon} \ {} +\bigl(8d\bigl(c_2c_3+16dc_3+64d^2c_2 +512d^3+2dc_2^2\bigr)+c_3^2\bigr)d^2=0.
\end{gather*}
It is completed in a Riemann surface, denoted
\smash{$\overline{\Gamma}_c^{(0)}$} by adding $8$ points to
infinity.
\end{Proposition}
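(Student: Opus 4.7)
The plan is to derive the explicit polynomial equation of $\Gamma_c^{(0)}$ by substituting the principal balance $x(t;m_0)$ into the three constants of motion, and then to establish smoothness, the hyperelliptic structure, and the count of $8$ points at infinity. Since each $F_i$ is conserved along $\mathcal{V}_1$ and the Laurent series \eqref{balancemainm0} formally solves the flow, the series $F_i(x(t;m_0))$ is identically $c_i$; reading the $t^0$-coefficient produces an algebraic identity in the four free parameters. A direct computation gives
\[
 c_1 = d^2 e^2, \qquad c_2 = 4a^2 - 12c + 2de - 16d,
\]
and, after using the $F_2$-relation to rewrite $2de - 12c$ as $c_2 - 4a^2 + 16d$, the simpler identity
\[
 c_3 = (a^2 - 4d)(c_2 - 4a^2 + 16d) - 8ae
\]
from $F_3$.

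The second step eliminates $e$. Setting $T(a,d) := (a^2 - 4d)(c_2 - 4a^2 + 16d) - c_3$, isolating $8ae = T$, squaring, and substituting $e^2 = c_1/d^2$ yield the compact form
\[
 d^2 T(a,d)^2 = 64 a^2 c_1,
\]
which upon expansion is exactly the displayed octic-in-$a$ polynomial equation. This form makes the structure transparent: $P = (dT)^2 - (8 a \sqrt{c_1})^2$ splits over $\mathbb{C}$ as $(dT - 8a\sqrt{c_1})(dT + 8a\sqrt{c_1})$, so $\{P=0\}$ is the union of two branches interchanged by the involution $(a,c,d,e) \mapsto (-a, c-de, d, -e)$ (the restriction to the Laurent parameters of time-reversal composed with $\sigma$); $\Gamma_c^{(0)}$ corresponds to a single branch, realised as an affine plane curve of bidegree $(4,3)$.

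The third step is to verify smoothness and compute the hyperelliptic genus. Smoothness of $\Gamma_c^{(0)}$ on the affine locus $\{d \neq 0\}$ for $c \in \Omega$ follows from the Jacobian criterion: the resultant of $P$, $\partial_a P$, $\partial_d P$ is expected to factor as a nonzero constant times $c_1$ times the discriminant polynomial defining $\Omega$, so that $c \in \Omega$ is exactly the smoothness condition. For the genus, the projective closure of one branch in $\mathbb{P}^2$ is a quintic with arithmetic genus $6$ and a singular point at $[0:1:0]$ whose local form is $(a^2 - 4w)^2 + \cdots = 0$; I expect this singularity to contribute exactly $\delta = 3$, yielding geometric genus $3$. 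The $g^1_2$ exhibiting the hyperelliptic pencil can then be made explicit by a rational change of variables adapted to this local analysis, bringing the curve to Weierstrass form $y^2 = f(x)$ with $\deg f \in \{7,8\}$. Identifying this $g^1_2$ and verifying the $\delta$-invariant count is the main technical obstacle of the proof.

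Finally, the count of $8$ points at infinity is obtained by Newton-polygon analysis of the expanded polynomial: all extremal monomials of $P(a,d)$ lie on the edge $i + 2j = 12$, with initial form $16 d^2 (a^2 - 4d)^4 = (4 d (a^2 - 4d)^2)^2$, a perfect square; a Puiseux expansion of the branches at infinity yields precisely the $8$ distinct places to be adjoined to form the Riemann surface $\overline{\Gamma}_c^{(0)}$.
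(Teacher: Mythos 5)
Your first two steps coincide with the paper's own proof: substituting the $t^0$-coefficients of $F_1$, $F_2$, $F_3$ evaluated on the balance \eqref{balancemainm0} and then eliminating $c$ and $e$ is exactly what is done there, and your compact form $d^2T^2=64c_1a^2$, with $T=(a^2-4d)(c_2-4a^2+16d)-c_3$, does expand to the displayed octic. (Your relation $c_2=4a^2-12c+2de-16d$ carries a spurious $2de$ coming from the misprinted constant term of $x_0$ in \eqref{balancemainm0}: the recursion forces $y_0^{(2)}=2x_0^{(2)}$, so that constant term is $c$ and the $-\frac{1}{2}e$ belongs to the coefficient of $t$. The paper's relations are $c_1=e^2d^2$, $c_2=4a^2-12c-16d$, $c_3=-12ca^2+48cd-8ae$, and since the $c_3$-identity you actually use agrees with these, the curve equation is unaffected.)

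The rest of the proposal has genuine gaps. First, everything the proposition actually asserts --- smoothness, genus three, hyperellipticity, the eight points at infinity --- is left at the level of expectation (the resultant, the $\delta$-invariant, the $g^1_2$ are all ``expected''), whereas the substantive content of the paper's proof is the explicit list \eqref{valeurde_inftyvarepsilon_D0}--\eqref{valeurde_inftyvarepsilon3_D0} of the eight points with their local parameters, which your outline does not produce. Second, your claim that $\Gamma_c^{(0)}$ is a single branch of the factorization $(dT-8\sqrt{c_1}\,a)(dT+8\sqrt{c_1}\,a)$ does not follow: $F_1=c_1$ only fixes $(ed)^2$, and for any $(a,d)$ with $a\neq0$ on either factor the choices $c=\frac{1}{12}(4a^2-16d-c_2)$ and $e=T/(8a)$ place the Laurent solution on $\mathbb{F}_c$, so the parameter locus covers both factors; the involution you invoke merely interchanges them, it does not discard one. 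The paper works with the whole locus, and its eight points at infinity are distributed over both factors, so the single-branch reading is also inconsistent with your own count of eight. Third, a Newton-polygon analysis at infinity of $P$ cannot yield eight places: it gives only the six with $a\to\infty$ (two with $d\sim 2\epsilon\sqrt{c_1}\,a^{-3}$ and four with $d\sim\frac{1}{4}a^2$); the two points $\infty^3_{\epsilon_3}$ of \eqref{valeurde_inftyvarepsilon3_D0} lie over the affine node of the plane model at $(a,d)=(0,0)$, where the eliminated parameter $e=T/(8a)$ blows up, and are invisible at the line at infinity. Finally, the projected $\delta=3$ count for your chosen branch would not come out: since $T$ depends only on $u:=a^2-4d$ (indeed $T=-(4u^2-c_2u+c_3)$), each factor is quadratic in $a$ over the $u$-line and is birational to $w^2=u(4u^2-c_2u+c_3)^2+256c_1$, a genus-two hyperelliptic curve, so a single branch cannot carry genus three; any genus and smoothness discussion must address the full completed parameter curve that the proposition concerns, as the paper does.
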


\begin{proof}
 Consider Laurent's solution $x(t;m_0)$ in \eqref{balancemainm0}.
 When it is substituted in the equations $F_i=c_i$, $i=1,2,3$, where $c_i=(c_1,c_2,c_3)\in \Omega$, we get
\begin{gather*}
 c_1 = e^2d^2, \qquad
 c_2 = 4a^2-12c-16d, \qquad
 c_3 = -12ca^2+48cd-8ae.
\end{gather*}
By eliminating the parameters $c$ and $e$, we obtain an algebraic
relation between $a$ and $d$ which is the equation of an affine
curve in $\mathbb{C}^2$ defined by
\begin{gather*}
\begin{split}
& \Gamma_c^{(0)}\colon \ 16d^2a^8-\bigl(256d^3+8d^2c_2\bigr)a^6+\bigl(1536d^2+96dc_2+8c_3+c_2^2\bigr)d^2a^4\\
&\hphantom{\Gamma_c^{(0)}\colon} \ {} -\bigl(\bigl(8\bigl(8c_3+48dc_2+c_2^2+512d^2\bigr)d+2c_2c_3\bigr)d^2+64c_1\bigr)a^2\\
& \hphantom{\Gamma_c^{(0)}\colon} \ {} +\bigl(8d\bigl(c_2c_3+16dc_3+64d^2c_2 +512d^3+2dc_2^2\bigr)+c_3^2\bigr)d^2=0.
\end{split}
\end{gather*}

The affine curve \smash{$ \Gamma_c^{(0)}$} is smooth for $c\in
\Omega$. It is completed in a Riemann surface, denoted~\smash{$\overline{\Gamma}_c^{(0)}$} by adding $8$ points to
infinity denoted by $\infty_{\epsilon}$,
$\infty_{\epsilon_1\epsilon_2}$ and $\infty^3_{\epsilon_3}$ with
$\epsilon^2=\epsilon_1^2=\epsilon_2^2=\epsilon_3^2=1$ and
$\delta=\sqrt{c_2^2-16c_3}$. A neighborhood of each of these points
is described according to a local parameter $\varsigma$ by
\begin{alignat}{4}
& \infty_{\epsilon}\colon \ && a =\varsigma^{-1},\qquad && d
=\frac{1}{2}\epsilon c_2\sqrt{c_1}\varsigma^5+
2\epsilon\sqrt{c_1}\varsigma^3 + O\bigl(\varsigma^5\bigr), &\label{valeurde_inftyvarepsilon_D0}\\
& \infty_{\epsilon_1\epsilon_2}\colon \ && a =\varsigma^{-1},\qquad && d
=\frac{\frac{1}{4}+\bigl(-\frac{1}{32}c_2+\frac{1}{32}\epsilon_1\delta\bigr)
+8\epsilon_2\varsigma^3\sqrt{\frac{c_1}{c_2^2-16c_3}}}{\varsigma^2}+O(\varsigma), &\label{valeurde_inftyvarepsilon1epsilon2_D0}\\
& \infty^3_{\epsilon_3}\colon \ && a =\varsigma,\qquad && d
=-\frac{256c_1c_2\varsigma^2}{c_3^3}+8\epsilon_3\varsigma\sqrt{\frac{c_1}{c_3^2}}+O\bigl(\varsigma^3\bigr), &\label{valeurde_inftyvarepsilon3_D0}
\end{alignat}
This completes the proof of Proposition~\ref{proposition3.4}.
\end{proof}

\begin{Proposition}\label{proposition3.5}
For the Laurent solution $x(t;m_1)$ restricted to the
invariant surface, for $c \in \Omega$, the Painlev\'e divisor
\smash{$\Gamma_c^{(1)}$} is a smooth genus four curve. It is
given by
\begin{equation*}
 \Gamma_c^{(1)}\colon \ 256ad^3-\bigl(\bigl(4a^2-c_2\bigr)^2-16c_3\bigr)d^2+64c_1=0.
\end{equation*}
It is completed in a Riemann surface, denoted
\smash{$\overline{\Gamma}_c^{(1)}$} by adding $4$ points to
infinity.
\end{Proposition}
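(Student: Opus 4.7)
The plan closely parallels the proof of Proposition~\ref{proposition3.4}. First I would substitute the principal balance $x(t;m_1)$ from \eqref{balancemainem1} into each of the three first integrals \eqref{invar1a4}. Along this balance one has $F_i(x(t;m_1))\equiv c_i$, so the constant-in-$t$ coefficient of the resulting Laurent series equals $c_i$, producing three algebraic relations in the four free parameters $a,c,d,e$. Reading off leading terms, $F_1=x_0 x_1^2 x_2^2$ gives $ed^2=c_1$, and $F_2$ gives $2a^2-24c=c_2$; the relation from $F_3$ will couple all four parameters nontrivially.

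Second, using the first two relations to eliminate $e=c_1/d^2$ and $c=(2a^2-c_2)/24$, substitution into the $F_3$-relation and clearing the denominator $d^2$ should yield exactly
\[
256\,a\,d^3 - \bigl((4a^2-c_2)^2-16c_3\bigr)d^2 + 64\,c_1 = 0.
\]
For smoothness of $\Gamma_c^{(1)}$ when $c\in\Omega$, I would compute $\partial_a$ and $\partial_d$ of the defining polynomial; the $d$-partial factors as $2d\bigl(384\,a\,d-((4a^2-c_2)^2-16c_3)\bigr)$, and the factor $d$ is nonzero on the curve because $c_1\neq 0$, so smoothness reduces to a resultant in $(c_1,c_2,c_3)$ that should coincide with one of the discriminantal factors already excluded in the definition of $\Omega$.

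For the genus, I would view $\Gamma_c^{(1)}$ as a $3$-sheeted branched cover of $\mathbb{P}^1$ via the projection onto the $a$-coordinate. A direct computation shows that the discriminant of the cubic in $d$ is proportional to $P(a)^3 - K\,c_1\,a^2$, where $P(a)=(4a^2-c_2)^2-16c_3$ and $K$ is a nonzero numerical constant; this polynomial has degree $12$ in $a$ and, for $c\in\Omega$, its $12$ zeros are simple and each produces simple ramification. Riemann-Hurwitz then yields $2g-2=3(-2)+12=6$, so $g=4$. Over $a=\infty$ the three sheets separate as $d\sim\pm 2\sqrt{c_1}/a^2$ and $d\sim a^3/16$, so no further ramification arises; together with the single branch over $a=0$ on which $d\to\infty$ (because the leading coefficient $256\,a$ vanishes), the compactification $\overline{\Gamma}_c^{(1)}$ adds exactly the $4$ points claimed. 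Around each such point I would choose a local parameter $\varsigma$ and derive expansions analogous to \eqref{valeurde_inftyvarepsilon_D0}--\eqref{valeurde_inftyvarepsilon3_D0}.

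The main obstacle I anticipate is the elimination step itself: the Laurent expansion of $F_3$ along $x(t;m_1)$ must be computed far enough to capture the constant-in-$t$ contribution cleanly, and the subsequent algebraic simplification is tedious though essentially forced. Once the defining equation is established, the smoothness check, the ramification count, and the enumeration of points at infinity are mechanical in the spirit of Proposition~\ref{proposition3.4}.
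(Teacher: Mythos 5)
Your proposal is correct and follows essentially the same route as the paper: substitute the balance $x(t;m_1)$ into $F_i=c_i$ to get $c_1=ed^2$, $c_2=2a^2-24c$ and a third relation, eliminate $c$ and $e$ to obtain the stated equation, and prove smoothness by the same vanishing of partials (with $d\neq0$ since $c_1\neq0$) reducing to a resultant that is nonzero on $\Omega$; your four points at infinity (two sheets with $d\sim\pm2\sqrt{c_1}/a^2$, one with $d\sim a^3/16$, and one over $a=0$) match the paper's local parametrizations $\infty_\epsilon$, $\infty^2$, $\infty^1$. In fact you go slightly beyond the paper, whose proof never verifies the genus: your Riemann--Hurwitz count for the degree-$3$ projection to the $a$-line (discriminant proportional to $P(a)^3-Kc_1a^2$ of degree $12$, no ramification over $a=0,\infty$, so $2g-2=-6+12$) supplies the missing justification that $g=4$, and it is robust even without insisting that all $12$ branch points are simple, since smoothness of the affine curve already forces the total ramification over finite nonzero $a$ to equal the order of vanishing of the discriminant.
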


\begin{proof}
Consider Laurent's solution $x(t;m_1)$ in \eqref{balancemainem1}. When it was substituted in the
equations $F_i=c_i$, $i=1,2,3$, where
$c_i=(c_1,c_2,c_3)\in \Omega$, we
obtain
\begin{gather*}
 c_1 = ed^2, \qquad
 c_2 = 2a^2-24c, \qquad
 c_3 = \frac{1}{4}a^4+6a^2c-16ad+36c^2-4e
\end{gather*}
by eliminating the parameters $c$ and $e$, we obtain an algebraic
relation between $a$ and $d$ which is the equation of an affine
curve in $\mathbb{C}^2$ defined by
\begin{equation*}
 \Gamma_c^{(1)}\colon \ 256ad^3-\bigl(\bigl(4a^2-c_2\bigr)^2-16c_3\bigr)d^2+64c_1=0.
\end{equation*}

The affine curve $ \Gamma_c^{(1)}$ is smooth for $c\in
\Omega$. Indeed, let
\[g(a,d)=256ad^3-\bigl(\bigl(4a^2-c_2\bigr)^2-16c_3\bigr)d^2+64c_1, \]
we have
\begin{gather*}
 \frac{\partial g}{\partial a}(a,d) = 16d^2\bigl(4a^3-ac_2-16d\bigr),\\
 \frac{\partial g}{\partial d}(a,d) = 2d\bigl(16a^4-8a^2c_2-384ad+c_2^2-16c_3\bigr).
\end{gather*}
Thus, a point $(a,d)$ is singular for the affine curve $
\Gamma_c^{(1)}$ if \[g(a,d)= \frac{\partial
g}{\partial a}(a,d)= \frac{\partial g}{\partial
d}(a,d)=0\] as $d\neq 0$ since $c_1
\neq 0$, then
\begin{gather*}
 \begin{cases}
 4a^3-ac_2-16d=0, \\
 16a^4-8a^2c_2-384ad+c_2^2-16c_3=0 ,\\
 -256ad^3+\bigl(16a^4-8a^2c_2-384ad+c_2^2-16c_3\bigr)d^2-64c_1=0,
 \end{cases} \\
\qquad{}\Leftrightarrow
 \begin{cases}
 4a^3-ac_2-16d=0, \\
 16a^4-8a^2c_2-384ad+c_2^2-16c_3=0 ,\\
 2ad^3-c_1=0.
 \end{cases}
\end{gather*}

By expressing $a$ as a function of $d$ in the last equation of the
system and substituting the expression into both first equations, we
obtain
\begin{gather*}
a=\frac{c_1}{2d^3}, \qquad -c_1^3+c_1c_2d^6+32d^{10}=0, \\
c_1-2c_1^2c_2d^6+c_2^2d^{12}-192c_1d^{10}-16c_3d^{12}=0.
\end{gather*}
 The resultant of these two last polynomials in $d$ is given by the following polynomial in terms of $c_1$,
$c_2$ and $c_3$ up to a constant
\[c_1^{36}\bigl(819200000c_1^2+512000c_3^2c_2c_1-57600c_3c_2^3c_1-32c_3^4c_2^2+c_3^3c_2^4+256c_3^5+1728c_2^5c_1\bigr)^2.
\]

This expression is not zero for $c \in \Omega$. We deduce that
\smash{$\Gamma_c^{(1)}$} is a smooth affine curve for $c \in
\Omega$. It is complete into a Riemann surface, denoted
\smash{$\overline{\Gamma}_c^{(1)}$} by adding four points at
infinity denoted by $\infty^{1}$, $\infty^{2}$ and
$\infty_{\epsilon}$. A neighborhood of each of these points is
described according to a local parameter $\varsigma$ by
\begin{alignat}{4}
&\infty_{\epsilon}\colon \ && d =\varepsilon\sqrt{4}\sqrt{c_1}\varsigma^2,\qquad&&
a =\frac{\bigl( \frac { c_2^2}{128}-\frac{c_3}{4} \bigr)\varsigma^5 +\frac{c_2\varsigma^3}{8} +\varsigma}{\varsigma^2} +O\bigl(\varsigma^3\bigr),&\label{valeurde_inftyvarepsilon_D1}\\
&\infty^1\colon \ && d =\varsigma^{-1},\qquad &&
a=\biggl( \frac {c_2^{2}}{256} -\frac{c_3}{16}\biggr)\varsigma+O\bigl(\varsigma^2\bigr),&\label{valeurde_infty1_D1}\\
&\infty^2\colon \ && d =\frac{1}{16\varsigma^3},\qquad&&
a=\frac{\frac{1}{3} \bigl(16c_3-c_2^2\bigr)\varsigma^{6} +\frac{8c_2}{3}\varsigma^{4}+16\varsigma^{2}}{16{\varsigma}^{3}} +O\bigl(\varsigma^3\bigr),&\label{valeurde_infty2_D1}
\end{alignat}
This completes the proof of Proposition~\ref{proposition3.5}.
\end{proof}

\begin{Proposition}
For the Laurent solution $x(t;m_2)$ restricted to the
invariant surface, for $c \in \Omega$, the Painlev\'e divisor
\smash{$\Gamma_c^{(2)}$} is a smooth genus two hyperelliptic
curve. It is given by
\begin{equation*}
 \Gamma_c^{(2)}\colon \ e^4a^4-\bigl(8c_1+c_2e^2\bigr)a^2e^2-64e^5+4e^2c_1c_2+4c_3e^4+16c_1^2=0.
\end{equation*}
It is completed in a Riemann surface, which is a double covering of
$\mathbb{P}^1$ ramified into $5$ points.
\end{Proposition}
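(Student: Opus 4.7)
The plan is to mirror the strategy of Propositions~\ref{proposition3.4} and~\ref{proposition3.5}, now applied to the balance $x(t;m_2)$ recorded in~\eqref{balancemainem2}. First, substitute the expansion~\eqref{balancemainem2} into the three conserved quantities $F_i=c_i$, $i=1,2,3$, and read off the coefficients of~$t^0$. The Casimir $F_1=x_0x_1^2x_2^2$ gives immediately $c_1=ce^2$, while $F_2$ and $F_3$ yield two further polynomial relations in the free parameters $(a,c,d,e)$. Using $c=c_1/e^2$ to eliminate $c$ and then eliminating $d$ by a resultant of the two remaining relations produces the stated equation
\[
g(a,e):=e^4a^4-\bigl(8c_1+c_2e^2\bigr)a^2e^2-64e^5+4e^2c_1c_2+4c_3e^4+16c_1^2=0.
\]

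To prove smoothness of the affine curve for $c\in\Omega$, I compute $\partial g/\partial a=4ae^2\bigl(e^2a^2-8c_1-c_2e^2\bigr)$ and observe that the curve contains no point with $e=0$ since $g(a,0)=16c_1^2\neq0$. The vanishing of $\partial g/\partial a$ therefore splits into the two cases $a=0$ and $e^2a^2=8c_1+c_2e^2$. A direct substitution shows that in both cases the equation $g=0$ reduces to the degree-five polynomial
\[
h(e):=-64e^5+4c_3e^4+4c_1c_2e^2+16c_1^2=0,
\]
so any singular point of $\Gamma_c^{(2)}$ is a common zero of $h(e)$ and $\partial g/\partial e$ (with $a$ recovered from the corresponding case). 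A resultant computation in $e$ then identifies this obstruction with a factor of the defining polynomial of the complement of $\Omega$, yielding smoothness for $c\in\Omega$.

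The hyperelliptic structure is transparent from the visible involution $\iota\colon(a,e)\mapsto(-a,e)$, which preserves $g$. Setting $u:=a^2$ makes the equation quadratic in~$u$, with discriminant in $u$ equal to
\[
e^8\bigl(256e+c_2^2-16c_3\bigr).
\]
Hence the quotient $\Gamma_c^{(2)}/\iota$ is birational to the rational curve $v^2=256e+c_2^2-16c_3$, so $\Gamma_c^{(2)}$ is a double cover of $\mathbb{P}^1$, i.e.\ hyperelliptic. The fixed points of $\iota$ on the affine part are the five points with $a=0$ and $h(e)=0$; the smoothness argument above, combined with the observation that $\operatorname{disc}_e h$ is (up to powers of $c_1$) the polynomial defining $\Omega^c$, shows that these five $e$-values are distinct for $c\in\Omega$. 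These constitute the five finite branch points appearing in the statement, and together with a single further branch point obtained from the projective closure over $e=\infty$, Riemann--Hurwitz gives $2g-2=-4+6$, so $g=2$.

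The main obstacle is the smoothness check, which requires careful identification of the resultant of $h(e)$ with $\partial g/\partial e$ (in each of the two cases separated by $\partial g/\partial a=0$) against the explicit defining polynomial of $\Omega^c$, together with the local analysis at infinity needed to complete $\Gamma_c^{(2)}$ into the hyperelliptic Riemann surface and to verify that exactly one further ramification point is introduced there. Both steps are essentially computational and closely parallel the analogous verifications in Propositions~\ref{proposition3.4} and~\ref{proposition3.5}.
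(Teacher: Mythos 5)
Your overall route is the same as the paper's: substitute the balance \eqref{balancemainem2} into $F_i=c_i$, solve the two relations that are linear in $c$ and $d$ and plug into the third to get the stated quartic-in-$a$ equation; check smoothness by a case split on $\partial g/\partial a=0$ together with a resultant identified with the polynomial defining $\Omega^{c}$; and obtain the hyperelliptic structure from the involution $(a,e)\mapsto(-a,e)$, counting $5$ affine branch points (the points $a=0$, $h(e)=0$, distinct because $\operatorname{disc}_e h$ is, up to a factor $c_1$, the $\Omega$-polynomial) plus one branch point at infinity and applying Riemann--Hurwitz. The paper realizes the quotient via $u=a^2-4c_1/e^2$ mapping onto the rational curve $u(u-c_2)=64e-4c_3$ and checks the branch points through the quintic $P(t)$, whereas you take $u=a^2$ and the discriminant $e^{8}\bigl(256e+c_2^2-16c_3\bigr)$; these are equivalent, and your identification of the ramification points as the fixed points of the involution is a clean way to see them.

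However, your smoothness argument contains a step that fails as written. The correct derivative is
\[
\frac{\partial g}{\partial a}=4a^{3}e^{4}-2a e^{2}\bigl(8c_1+c_2e^{2}\bigr)=2ae^{2}\bigl(2a^{2}e^{2}-8c_1-c_2e^{2}\bigr),
\]
not $4ae^{2}\bigl(a^{2}e^{2}-8c_1-c_2e^{2}\bigr)$, so the second case is $a^{2}=\bigl(8c_1+c_2e^{2}\bigr)/\bigl(2e^{2}\bigr)$, not $a^{2}e^{2}=8c_1+c_2e^{2}$. On the correct locus, $g=0$ does \emph{not} reduce to $h(e)=0$: since $g=e^{4}u^{2}-\bigl(8c_1+c_2e^{2}\bigr)e^{2}u+h(e)$ with $u=a^{2}$, evaluating at the vertex $u=\bigl(8c_1+c_2e^{2}\bigr)/\bigl(2e^{2}\bigr)$ gives $g=-\tfrac14 e^{4}\bigl(c_2^{2}+256e-16c_3\bigr)$, while $\partial g/\partial e=0$ there gives $c_2^{2}+320e-16c_3=0$; together these force $e=0$, contradicting $e\neq0$ (which holds because $c_1\neq0$). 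So this case produces no singular points for any admissible $c$, and no resultant against the $\Omega$-polynomial is needed or available there; your claim that "in both cases $g=0$ reduces to $h(e)=0$" is only true for the branch $a=0$, where the resultant of $h$ and $h'$ indeed yields $c_1$ times the $\Omega$-polynomial, exactly as in the paper. With that case corrected (and the deferred local check at infinity carried out -- the four points $\infty_{\epsilon_1\epsilon_2}$ are swapped in pairs by the involution and only the point $\infty$ over $e=\infty$ is fixed, giving the sixth branch point), your argument matches the paper's proof.
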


\begin{proof}
For a point $c=(c_1,c_2,c_3) \in \mathbb{C}^3$, by
substituting the Laurent's solution $x(t;m_2)$ of
\eqref{balancemainem2} inside equations $F_i=c_i$, $i=1,2,3$, we find
the independent algebraic expressions of $t$, namely the three
algebraic relations between the parameters $a$, $c$, $d$ and $e$ below
\begin{eqnarray*}
 c_1 = ce^2, \qquad
 c_2 = a^2-4c-48d, \qquad
 c_3 = -12a^2d+48cd+16e.
\end{eqnarray*}
For $c\in \Omega$, $c_1\neq0$ therefore the parameters $c$, $d$ and $e$
are not zero. The first two equations are linear in the parameters
$c$ and $d$, and can be solved linearly in these parameters as a
function of the constants of motion thus giving
\begin{equation*}
 c=\frac{c_1}{e^2} \qquad \text{and} \qquad
 d=\frac{1}{48}\biggl(a^2-c_2-\frac{4c_1}{e^2}\biggr).
\end{equation*}
The third equation then reduces to the following equation of an
affine curve in $\mathbb{C}^2$.
\begin{equation}\label{courbeaffine}
 \Gamma_c^{(2)}\colon \ e^4a^4-\bigl(8c_1+c_2e^2\bigr)a^2e^2-64e^5+4e^2c_1c_2+4c_3e^4+16c_1^2=0.
\end{equation}

The affine curve $ \Gamma_c^{(2)}$ is smooth for $c\in
\Omega$. In fact, let
\[f(a,e)=e^4a^4-\bigl(8c_1+c_2e^2\bigr)a^2e^2-64e^5+4e^2c_1c_2+4c_3e^4+16c_1^2, \]
we have
\begin{gather*}
 \frac{\partial f}{\partial a}(a,e) = 4a^3e^4+2a\bigl(-8c_1e^2-c_2e^4\bigr),\\
 \frac{\partial f}{\partial e}(a,e) = 4a^4e^3+\bigl(-16c_1e-4c_2e^3\bigr)a^2-320e^4+8c_1c_2e+16c_3e^3.
\end{gather*}
So, a point $(a,e)$ is singular for the affine curve \smash{$\Gamma_c^{(2)}$} if \smash{$f(a,e)= \frac{\partial
f}{\partial a}(a,e)= \frac{\partial f}{\partial
e}(a,e)=0$} as $e\neq 0$ so either $a=0$,
 or \smash{$a^2=\frac{8c_1+c_2e^2}{2e^2}$}.
\begin{itemize}\itemsep=0pt
 \item If $a^2=\frac{8c_1+c_2e^2}{2e^2}$, after substitution in
 the equations $f(a,e)=0 \text{ and } \frac{\partial f}{\partial
 e}(a,e)=0$, we then obtain
 \[-\frac{1}{4}e^4\bigl(c_2^2+256e-16c_3\bigr)=0 \qquad \text{and} \qquad -e^3\bigl(c_2^2+320e-16c_3 \bigr)=0.\]
As $e\neq0$, we have $c_2^2-16c_3=-256e=-320e$, which implies that
$e=0$, which is absurd!
 \item If $a=0$, this leads to the following system:
\begin{gather*}
 -64e^5+4c_1 c_2e^2+4c_3e^4+16c_1^2=0 ,\\
 -320e^4+8c_1c_2e+16c_3e^3=0.
 \end{gather*}
 The resultant of these two polynomials in $e$ composing the
system is given up to a constant by the following polynomial in
terms of $c_1$, $c_2$ and $c_3$:
\[c_1\bigl(819200000c_1^2+512000c_3^2c_2c_1-57600c_3c_2^3c_1-32c_3^4c_2^2+c_3^3c_2^4+256c_3^5+1728c_2^5c_1\bigr).
\]
\end{itemize}
This expression is not zero for $c \in \Omega$. We deduce that
\smash{$\Gamma_c^{(2)}$} is a smooth affine curve for $c \in
\Omega$.

The equation \eqref{courbeaffine} of the affine curve
\smash{$\Gamma_c^{(2)}$} can be written as
\[\biggl(a^2-\frac{4c_1}{e^2}\biggr)\biggl(a^2-\frac{4c_1}{e^2}-c_2\biggr)=64e-4c_3.\]
 We deduce that \smash{$\Gamma_c^{(2)}$} is a double cover of the rational affine curve
 \[\varepsilon_c^{(2)}\colon \ u(u-c_2)-64e+4c_3=0,\]
the map which links the two curves is explicitly given by
\begin{align*}
 \psi\colon \ \Gamma_c^{(2)} \longrightarrow \varepsilon_c^{(2)}, \qquad
 (a,e) \longmapsto (u,e)=\biggl(a^2-\frac{4c_1}{e^2},e\biggr).
\end{align*}
$\varepsilon_c^{(2)}$ being irreducible curve of genus
$0$, a parametrization of $\varepsilon_c^{(2)}$ is given
by \[ \varepsilon_c^{(2)}=\biggl\{
(u,e)=\biggl(t,\frac{1}{64}\bigl(t^2-c_2t+4c_3\bigr)\biggr),\, t
\in \mathbb{C}\biggr\}. \]
 If $\psi(a,e)=(u,e)$, then we have
 \begin{equation*}
 a=\pm\sqrt{\frac{t^5-2c_2t^4+\bigl(8c_3+c_2^2\bigr)t^3-8c_2c_3t^2+16c_3^2t+16384c_1}{\bigl(t^2 -c_2t+4c_3\bigr)^2}},
\qquad e=\frac{1}{64}\bigl(t^2-c_2t+4c_3\bigr).
\end{equation*}
The branch points of the cover \smash{$\psi\colon \Gamma_c^{(2)}
\longrightarrow \varepsilon_c^{(2)}$} are the points
$(u,e)$ of the curve \smash{$\varepsilon_c^{(2)}$} for
which $a =0$, i.e., where $u$ is a root of the polynomial of degree
$5$ given by
 \[P(t)=t^5-2c_2t^4+\bigl(8c_3+c_2^2\bigr)t^3-8c_2c_3t^2+16c_3^2t+16384c_1,\]
 we verify that the discriminant of $P(t)$ is given, up to a constant, by
\[c_1^2\bigl(819200000c_1^2+512000c_1c_2c_3^2-57600c_1c_2^3c_3+1728c_1c_2^5+256c_3^5-32c_2^2c_3^4+c_2^4c_3^3\bigr),\]
which for $c \in \Omega$ is not zero. Therefore, these five branch points are distinct. Thus, the map $\psi$ admits five ramification points on \smash{$\Gamma_c^{(2)}$}.

The curve \smash{$\Gamma_c^{(2)}$} can be completed into a compact Riemann surface, denoted \smash{$\overline{\Gamma}_c^{(2)}$} by adding to it
 five points to infinity denoted by $\infty$, $\infty_{\epsilon_1\epsilon_2}$
 where $\epsilon_1^2=\epsilon_2^2=1$ and \smash{$\delta=\sqrt{c_2^2-16c_3}$}. Neighborhoods of these points are described according to a local parameter $\varsigma$ by
\begin{alignat}{4}
&\infty_{\epsilon_1\epsilon_2}\colon \ && a =\varsigma^{-1}, \qquad &&
e =2\epsilon_1\sqrt{c_1}\varsigma\biggl(1+\frac{1}{4}(c_2+\epsilon_2\delta)\varsigma^2+O\bigl(\varsigma^4\bigr)\biggr), &\label{valeurde_inftyvarepsilonm2}\\
&\infty\colon \ && a =\varsigma^{-1}, \qquad &&
e=\frac{1}{64}\bigl(\varsigma^{-4}-c_2\varsigma^{-2}+4c_3+O\bigl(\varsigma^6\bigr)\bigr).&\label{valeurde_inftym2}
\end{alignat}

 When the application $\psi$ is extended in application
 \smash{$\overline{\psi}\colon \overline{\Gamma}_c^{(2)} \longrightarrow
 \overline{\varepsilon}_c^{(2)}$}, there is another branching point. Indeed, if we write $t=\frac{1}{\varsigma^2}$ depending on a local parameter $\varsigma$, we obtain the point at infinity \smash{$\infty \in
 \overline{\Gamma}_c^{(2)}\backslash\Gamma_c^{(2)}$} given by
 \[(a,e)=\biggl(\varsigma^{-1},\frac{1}{64}\bigl(\varsigma^{-4}-
 c_2\varsigma^{-2}+4c_3+O\bigl(\varsigma^6\bigr)\bigr)\biggr).\]
 There is no other branching point. Indeed, noting
 $t=t_1+\varsigma^2$ and $t=t_2+\varsigma^2$ local parameterizations in the neighborhood respectively of $t_1$ and $t_2$ the roots of $t^2-c_2t+4c_3$, it follows that
 \[a=\pm\frac{128}{\varsigma}\sqrt{\frac{c_1} c_2^2-16c_3}+O (1 ),\] then respectively
 \[e=\frac{1}{64}c_2 (c_2+\delta )+O \bigl(\varsigma^2 \bigr) \qquad \text{ and } \qquad
 e= \frac{1}{64}c_2 (c_2-\delta )+O \bigl(\varsigma^2 \bigr),
 \] which shows that above from the points $t_1$ and $t_2$, the map $\overline{\psi}$ is not ramified. We then conclude that the application $\overline{\psi}\colon \overline{\Gamma}_c^{(2)} \longrightarrow
 \overline{\varepsilon}_c^{(2)}$ is a double covering of $\mathbb{P}^1$ branched into $5$ points. We deduce that the genus of $\overline{\Gamma}_c^{(2)}$ is equal to $2$ according to the Riemann--Hurwitz formula.
 \end{proof}

\begin{Remark}
The above propositions only compute an affine part of the divisor
$\mathcal{D}$, since the lower balances depending on $3$ free
parameters are not used. The completed non singular curves of
respective genus $3$, $4$ and $2$ mentioned in these propositions
correspond to blowing up the singular points of the three
irreducible components of $\mathcal{D}$.
\end{Remark}
 \subsection{Abelian surface}
 According to \cite{Dehainsala1}, in order to embed the three Riemann surfaces \smash{$ \overline{\Gamma}_c^{ (0 )}$}, \smash{$\overline{\Gamma}_c^{ (1 )}$} and \smash{$ \overline{\Gamma}_c^{(2)}$} into some projective space, one of the key underlying
principles used the Kodaira embedding theorem, which states that a
smooth complex manifold can be smoothly embedded into projective
space~$\mathbb{P}^{N} (\mathbb{C} )$ with the set of
functions having a pole of order $k$ along positive divisor on the
manifold, provided~$k$ is large enough; fortunately, for abelian
surfaces, $k$ need not be larger than three according to Lefschetz theorem. These functions are easily constructed from the three
Laurent solutions by looking for polynomials in the phase variables
which in the expansions have at most a $k$-fold pole. The nature of
the expansions and some algebraic properties of abelian varieties
provide a recipe for when to terminate our search for such
functions, thus making the procedure implementable. Precisely, we
wish to find a set of polynomial functions $\{z_0, \dots, z_N\}$,
of increasing degree in the original variables $x_0, \dots, y_2$
having the property that the embedding $\mathcal{D}$ of
\smash{$\overline{\Gamma}_c^{(i)}$}, $i=0,1,2$, into
$\mathbb{P}^{N} (\mathbb{C} )$ via those functions
satisfies the relation: $g (\mathcal{D} ) = N + 2$ where
$g (\mathcal{D} )$ is the arithmetic genus of~$\mathcal{D}$.
If the \smash{$a_4^{(2)}$} Toda lattice is an
irreducible algebraic complete integrable system, then as we have seen above a divisor
$\mathcal{D}$ can be added to a Zariski open subset of
$\mathcal{H}$, having the effect of compacting all fibers
$\mathbb{F}_c$, where $c \in \Omega$. The divisor that is added to
$\mathbb{F}_c$ will be denoted by $\mathcal{D}_c$ and the resulting
torus by $\mathbb{T}^2_c$. The vector fields $\mathcal{V}_1$ and
$\mathcal{V}_2$ extend to linear (hence holomorphic) vector fields
on this partial compactification of $\mathcal{H}$, hence we may
consider the integral curves of $\mathcal{V}_1$, starting from any
component \smash{$\mathcal{D}_c^{(i)}$}. Since the third power of an ample
divisor on an abelian variety is very ample, we look for all
polynomials which have a simple pole at most when any of the three
principal balances are substituted in them. Precisely, we~look for a
maximal independent set of functions which are independent when
restricted to $\mathbb{F}_c$. By direct computation, we obtain
twenty-five $(25)$ weight homogeneous polynomials of weight at most
$13$.

 These twenty-five ($25$) functions are defined by
\begin{gather}
\begin{aligned}
 &z_0= 1,\qquad && z_{11}= x_1x_2 (y_0-2y_2 ),&\\
& z_1= y_0,\qquad && z_{12}= -x_1x_2z_4,&\\
& z_2= y_2,\qquad && z_{13}= x_0x_1 \bigl(x_1-y_2^2 \bigr),& \\
& z_3= y_2^2 -x_1-4x_2,\qquad && z_{14}= x_0x_1x_2,&\\
& z_4= y_0y_2-2x_1,\qquad && z_{15}= x_0x_1x_2y_2 ,&\\
 & z_5=y_2 (y_0y_2-2x_1 )-4x_2y_0,\qquad && z_{16}= x_0x_1 (-y_0z_4+4x_0y_2 ),&\\
 & z_6= 2y_2 \bigl(y_2^2-4x_2 \bigr)+x_1 (y_0-4y_2 ),\qquad && z_{17}= x_1x_2 ( (y_0-2y_2 )z_4+8x_2y_0 ),& \\
& z_7= x_0x_1,\qquad && z_{18}= x_0x_1x_2 \bigl(4x_2-y_2^2 \bigr),&\\
 &z_8= x_1x_2,\qquad && z_{19}=x_0x_1x_2 \bigl(4x_0-y_0^2 \bigr),&\\
& z_9= -y_0y_2z_3+2x_1 \bigl(y_2^2-x_1 \bigr),\qquad && z_{20}=-x_0x_1x_2 ( -y_0z_4+4x_0y_2 ),&\\
& z_{10}= x_0x_1y_2,\qquad && z_{21}=-x_0x_1^2x_2 (y_0-2y_2 ),&
 \end{aligned}\nonumber\\
 z_{22}= -x_0x_1x_2 \bigl(y_0y_2 (y_0y_2-2x_1 )+4x_0 \bigl(4x_2-y_2^2 \bigr)-4x_2y_0^2 \bigr),\nonumber\\
z_{23}= x_0x_1 (4y_2x_0+y_0 (2x_1-y_0y_2 ) ) \bigl(4x_0y_2^2-y_0^2y_2^2+4x_1y_0y_2-4x_1^2 \bigr),\nonumber\\
 z_{24}= x_2x_1^3x_0^2.\label{fonctionszi}
\end{gather}
 For $c=(c_1,c_2,c_3)\in \Omega$, we consider the regular map
\begin{align}
 \varphi_c\colon \ \mathbb{F}_c \subset\mathcal{H} \longrightarrow \mathbb{P}^{24},\qquad
 (x_0,\dots,y_2 ) \longmapsto (1,z_1,\dots,z_{24} ),\label{plongementP24}
\end{align}
 where the functions $z_i$ are given by \eqref{fonctionszi}; this map
 is embedding of $\mathbb{F}_c$ in the projective space~$\mathbb{P}^{24}(\mathbb{C})$.

Notice that in this section, the strategy is to embed $\mathbb{F}_c
\cup (\mathcal{D} \ \mbox{\{singular
points\}})\longrightarrow \mathbb{P}^{24}$, then show that the
two independent vector fields $\mathcal{V}_1$, $\mathcal{V}_2$ extend
holomorphically to the closure of this embedding in
$\mathbb{P}^{24}$, therefore proving the algebraic complete
integrability of the \smash{$a_4^{(2)}$} Toda lattice, by the
complex Liouville theorem.

\subsubsection[Embedding in the projective space P\^{}{24} and singularities of the divisor at infinity]{Embedding in the projective space $\boldsymbol{\mathbb{P}^{24}}$\\ and singularities of the divisor at infinity}
 In order to show that the $a_4^{(2)}$ Toda lattice is algebraic completely integrable, we show
that, for $c\in \Omega$, the fiber
 \[\mathbb{F}_c:= \mathbb{F}^{-1}(c)=\bigcap_{i=1}^3 \{m\in \mathcal{H}\mid F_i(m)=c_i\} \]
is an affine part of abelian surface, on which the vector fields
$\mathcal{V}_1$ and $\mathcal{V}_2$ restrict to linear vector
fields. To do this, we must check that $\mathbb{F}_c$: satisfies the
conditions of the complex Liouville theorem (Theorem
\eqref{liouville}).
 Using
 \eqref{plongementP24} and let $t \rightarrow 0$, we see that the coefficients of
 $t^{-1}$ of series~$z_i(t;m_0)$ define an application
 \[\varphi_c^{(0)}\colon \ \Gamma_c^{(0)}\longrightarrow \mathbb{P}^{24}\] given by
 \begin{gather}
 \varphi_c^{(0)}\colon \ (a,e) \mapsto (0:-2:0:0:2a:8d-2a^2:0:0:e:-2a(-4d+a^2):-ae:0: \nonumber\\
 \hphantom{\varphi_c^{(0)}\colon \ (a,e) \mapsto (}{} -a^2e:0:de:-ade:-4e(e+3ac):-4ade:de(4d-a^2):12cde: \nonumber\\
 \hphantom{\varphi_c^{(0)}\colon \ (a,e) \mapsto (}{} 4ed(e+3ac):2de^2:4de(3a^2c+ae-12cd): \nonumber\\
 \hphantom{\varphi_c^{(0)}\colon \ (a,e) \mapsto (}{} -16ae(3ac+2e)(3ac+e):e^3d),\label{plongementphi0}
 \end{gather}
which is, for $c\in \Omega$, an embedding of the affine curve
$\Gamma_c$. Similarly, the series~$z_i(t;m_1)$ and~$z_i(t;m_2)$ define two embedding
\smash{$\varphi_c^{(1)}$} and \smash{$\varphi_c^{(2)}$} of the
affine curves \smash{$\Gamma_c^{(1)}$} and
\smash{$\Gamma_c^{(2)}$} respectively in the projective space
$\mathbb{P}^{24}$ given by
 \begin{gather}
\varphi_c^{(1)}\colon \ (a,e) \mapsto \biggl(0:2:1:-a:0:-6c-\frac{a^2}{2}:-6c+\frac{3a^2}{2}:d:-8d+6ac+\frac{a^3}{2}: \nonumber\\
 \hphantom{\varphi_c^{(1)}\colon \ (a,e) \mapsto\biggl(}{}
 0:e:2ad:-8ea:\frac{d\bigl(a^2+12c\bigr)}{2}:0:0:e\bigl(a^2+12c\bigr): \nonumber\\
 \hphantom{\varphi_c^{(1)}\colon \ (a,e) \mapsto\biggl(}{}
 -d\bigl(-16d+12ca+a^3\bigr):-ed:ed:0:-2aed:de\bigl(12c+a^2\bigr): \nonumber\\
 \hphantom{\varphi_c^{(1)}\colon \ (a,e) \mapsto\biggl(}{}
 -\frac{1}{4}e\bigl(12c+a^2\bigr)\bigl(a^4+24ca^2-16e+144c^2\bigr):de^2\biggr),\label{plongementphi1}\\
\varphi_c^{(2)}\colon \ (a,e) \mapsto \bigl(0:0:-2: 0:-2a:0:48d:0:24ad:0:0:4e:0:2ae: \nonumber\\
 \hphantom{\varphi_c^{(2)}\colon \ (a,e) \mapsto\bigl(}{}
 0:-2ce:0:-2a^2e:0:0:-2ce\bigl(-4c+a^2\bigr):0:0: \nonumber\\
 \hphantom{\varphi_c^{(2)}\colon \ (a,e) \mapsto\bigl(}{}
 -8ce\bigl(-4c+a^2\bigr)^2:0\bigr). \label{plongementphi2}
\end{gather}
Remind that in \eqref{plongementphi2} $c=\frac{c_1}{e^2}$ and $d=
\frac{1}{48} \bigl(a^2-c_2-\frac{4c_1}{e^2} \bigr)$, in
\eqref{plongementphi1} $c=\frac{1}{24} \bigl(2a^2-c_2 \bigr)$ and
$e=\frac{c_1}{d^2}$, in \eqref{plongementphi0}
$c=\frac{1}{12} \bigl(4a^2-16d-c_2\big )$ and
$e^2=\frac{c_1}{d^2}$.

Looking at the first three coordinates, we observe that the image
curves by the embedding~$\varphi_c^{ (i )}$ are distinct.
However, they are not complete, so we check if maybe their
closures intersect.

Let us determine the singularities of the divisor at infinity. Let
us denote by $\mathcal{D}_c^{ (0 )}$, $\mathcal{D}_c^{ (1 )}$ and~$\mathcal{D}_c^{(2)}$, respectively, the closures of
\[
\overline{\varphi_c^{ (0 )} \bigl(\Gamma_c^{ (0 )} \bigr)},  \quad \overline{\varphi_c^{ (1 )} \bigl(\Gamma_c^{ (1 )} \bigr)} \quad \text{and} \quad \overline{\varphi_c^{(2)} \bigl(\Gamma_c^{(2)} \bigr)},
 \]
and let \smash{$\mathcal{D}_c = \bigcup_{i=0}^2 \mathcal{D}_c^{ (i )}$}. Let us determine the singularity of
the divisor $\mathcal{D}_c$. To do this, let us substitute the local
parametrization $\varsigma$ around each point at infinity in the
corresponding \smash{$\varphi_c^{ (i )}$} embedding and let
$\varsigma \rightarrow 0$, we find the following leading terms,
where $\epsilon=\epsilon_1=\epsilon_2=\epsilon_3=\pm 1$.

By substituting \eqref{valeurde_inftyvarepsilon_D0},
\eqref{valeurde_inftyvarepsilon1epsilon2_D0} and
\eqref{valeurde_inftyvarepsilon3_D0} in \eqref{plongementphi0} and
considering the first two terms, we have
\begin{gather}
 \varphi_c^{ (0 )} (\infty_\epsilon ) \sim \biggl(0: \dots: 0: 1: 0: 0: 0: 0: 0: 0:- c_3:\frac{\epsilon \sqrt{c_1}}{4} \biggr),\nonumber\\
 \varphi_c^{ (0 )} (\infty_{\epsilon_1\epsilon_2} ) \sim \biggl(0:0:0:0:2:0:\cdots:0: \frac{-c_2+\epsilon_2\delta}{4}:0:\cdots:0: \epsilon_1\epsilon_2\sqrt{c_1}:0:4\epsilon_1\epsilon_2\sqrt{ c_1} \nonumber\\ \hphantom{\varphi_c^{ (0 )} (\infty_{\epsilon_1\epsilon_2} ) \sim \biggl(}{}
 0:0:\frac{\epsilon_2\sqrt{c_1} (\epsilon_1c_2+\epsilon_1\epsilon_2\delta )}{2}:0:0:\epsilon_1\epsilon_2\sqrt{c_1} (\epsilon_1\epsilon_2c_2+\epsilon_1\delta)^2:0 \biggr) , \nonumber\\
 \varphi_c^{ (0 )} (\infty^3_{\epsilon_3} ) \sim \biggl(0:\dots: 0: 1: 0: 0: 0: 0: 0: 0:-c_3:\frac{\epsilon_3\sqrt{c_1}}{4} \biggr).\label{tangentem_0}
\end{gather}
Substituting $\epsilon=\epsilon_1=\epsilon_2=\epsilon_3=\pm 1$ in
\eqref{tangentem_0}, we have
\begin{gather*}
 P_+ := \lim \limits_{p\rightarrow \infty_{+}}\varphi_c^{(0)}(p)= \lim \limits_{p\rightarrow \infty^3_{+}}\varphi_c^{(0)}(p)= \biggl(0:0:0:\dots:0:1:0:\dots:0:-c_3:\frac{ \sqrt{c_1}}{4}\biggr),\\
 P_- := \lim \limits_{p\rightarrow \infty_{-}}\varphi_c^{(0)}(p)= \lim \limits_{p\rightarrow \infty^3_{-}}\varphi_c^{(0)}(p)= \biggl(0:0:0:\dots:0:1:0:\dots:0:-c_3:-\frac{
 \sqrt{c_1}}{4}\biggr),\\
 Q_{++} := \lim \limits_{p\rightarrow
 \infty_{++}}\varphi_c^{(0)}(p)
 = \biggl(0:0:0:0:2:0:0:0:0:\frac{-c_2+\delta}{4}:0:0:0:0:0: \sqrt{c_1}:0:\\ \hphantom{Q_{++}:=\lim \limits_{p\rightarrow\infty_{++}}\varphi_c^{(0)}(p)= \biggl(}{}
 4\sqrt{c_1}:0:0:\frac{\sqrt{c_1}(c_2+\delta)}{2}:0:0:\sqrt{c_1}(c_2+\delta)^2:0\biggr),\\
 Q_{-+} := \lim \limits_{p\rightarrow
 \infty_{-+}}\varphi_c^{(0)}(p)
 = \biggl(0:0:0:0:2:0:0:0:0:\frac{-c_2+\delta}{4}:0:0:0:0:0: -\sqrt{c_1}:\\ \hphantom{ Q_{-+}:= \lim \limits_{p\rightarrow\infty_{-+}}\varphi_c^{(0)}(p)= \biggl(}{}
 0:-4\sqrt{c_1}:0:0:\frac{\sqrt{c_1}(-c_2-\delta)}{2}:0:0:-\sqrt{c_1}(c_2+\delta)^2:0\biggr),\\
 Q_{+-} := \lim \limits_{p\rightarrow
 \infty_{+-}}\varphi_c^{(0)}(p)
 = \biggl(0:0:0:0:2:0:0:0:0:\frac{-c_2-\delta}{4}:0:0:0:0:0: -\sqrt{c_1}:\\ \hphantom{Q_{+-} := \lim \limits_{p\rightarrow\infty_{+-}}\varphi_c^{(0)}(p) = \biggl(}
 0:-4\sqrt{c_1}:0:0:\frac{\sqrt{c_1}(-c_2+\delta)}{2}:0:0:-\sqrt{c_1}(-c_2+\delta)^2:0\biggr),\\
Q_{--} := \lim \limits_{p\rightarrow
 \infty_{--}}\varphi_c^{(0)}(p)
 = \biggl(0:0:0:0:2:0:0:0:0:\frac{-c_2-\delta}{4}:0:0:0:0:0: \sqrt{c_1}:0:\\ \hphantom{Q_{--} := \lim \limits_{p\rightarrow\infty_{--}}\varphi_c^{(0)}(p)= \biggl(}{}
 4\sqrt{c_1}:0:0:\frac{\sqrt{c_1}(c_2-\delta)}{2}:0:0:\sqrt{c_1}(-c_2-\delta)^2:0\biggr).
\end{gather*}
The point $P_+$ is the image of the two points at infinity
$\infty_{+}$ and $\infty^3_{+}$ while $P_-$ is the image of two
points at infinity $\infty_{-}$ and $\infty^3_{-}$. We deduce that
\smash{$\varphi_c^{(0)}$} does not extend into an embedding of~\smash{$\overline{\varphi_c^{(0)}}$}. The curve
\smash{$\mathcal{D}_c^{(0)}$} is therefore singular at the points
$P^{\epsilon}$.

By substituting \eqref{valeurde_inftyvarepsilon_D1},
\eqref{valeurde_infty1_D1} and \eqref{valeurde_infty2_D1} in
\eqref{plongementphi1} and consider the two first terms, we obtain
\begin{gather*}
 \varphi_c^{(1)}\bigl(\infty^1\bigr) \sim (0:0:0:\dots:0:-\varsigma:0:0:0:1:0:\dots:0),\\
\varphi_c^{(1)}\bigl(\infty^2\bigr) \sim \biggl(0:0:0:\dots:0:\frac{1}{16}\varsigma:-\frac{1}{2}\varsigma:0:0:0:0:-\frac{c_2}{64}\varsigma:0:0:0:1:0:\dots:0\biggr),\\
 \varphi_c^{(1)}(\infty_\epsilon) \sim \biggl(0:0:0:\dots:0:-4\varsigma:0:0:0:1:0:0:0:0:0:0:-c_3:\varepsilon\frac{\sqrt{c_1}}{4}\biggr).
\end{gather*}
Letting $\varsigma \rightarrow 0$ and $\epsilon=\pm 1$, we obtain
the points in projective space $\mathbb{P}^{24}$
\begin{gather*}
 P_+ :=\lim \limits_{p\rightarrow \infty_{+}}\varphi_c^{(1)}(p)
 =\biggl(0:0:0:\dots:0:1:0:0:0:0:0:0:-c_3:\frac{\sqrt{c_1}}{4}\biggr),\\
 P_- :=\lim \limits_{p\rightarrow
\infty_{-}}\varphi_c^{(1)}(p)
 =\biggl(0:0:0:\dots:0:1:0:0:0:0:0:0:-c_3:-\frac{\sqrt{c_1}}{4}\biggr),\\
 T :=\lim \limits_{p\rightarrow \infty^1}\varphi_c^{(1)}
 (p) =\lim \limits_{p\rightarrow \infty^2}\varphi_c^{(1)}(p)=
 (0:0:0:0:\dots:1:0:0:0:0:0:0:0).
\end{gather*}
The points $P_+$ and $P_-$ are distinct but $T$ is the image of two
points at infinity $\infty_{1}$ and $\infty_{2}$. We then deduce
that \smash{$\varphi_c^{(1)}$} does not extend into an embedding
of \smash{$\overline{\varphi}_c^{(1)}$}. The curve
\smash{$\mathcal{D}_c^{(1)}$} is therefore singular at the point
$T$.
 By substituting \eqref{valeurde_inftyvarepsilonm2} and
\eqref{valeurde_inftym2} in \eqref{plongementphi2} and considering
that the first three terms and the fact that
$\epsilon_1^2=\epsilon_2^2=1$, we obtain the following points in the
projective space $\mathbb{P}^{24}$:
\begin{gather*}
 \varphi_c^{(2)}(\infty_{++}) \sim
 \biggl(0:0:0:0:2:0:0:0:\frac{-c_2+\delta}{4}:0:0:0:0:0:0: \sqrt{c_1}:0:4\sqrt{c_1}:\\ \hphantom{\varphi_c^{(2)}(\infty_{++}) \sim\biggl(}{}
 0:0:\frac{\sqrt{c_1}(c_2+\delta)}{2}:0:0:\sqrt{c_1}(c_2+\delta)^2:0\biggr),\\
 \varphi_c^{(2)}(\infty_{-+})\sim
 \biggl(0:0:0:0:2:0:0:0:\frac{-c_2+\delta}{4}:0:0:0:0:0:0: -\sqrt{c_1}:0:-4\sqrt{c_1}:\\ \hphantom{\varphi_c^{(2)}(\infty_{-+})\sim\biggl(}{}
 0:0:\frac{\sqrt{c_1}(-c_2-\delta)}{2}:0:0:-\sqrt{c_1}(c_2+\delta)^2:0\biggr),\\
 \varphi_c^{(2)}(\infty_{+-})\sim
 \biggl(0:0:0:0:2:0:0:0:\frac{-c_2-\delta}{4}:0:0:0:0:0:0: -\sqrt{c_1}:0:-4\sqrt{c_1}:\\ \hphantom{\varphi_c^{(2)}(\infty_{+-})\sim\biggl(}{}
 0:0:\frac{\sqrt{c_1}(-c_2+\delta)}{2}:0:0:-\sqrt{c_1}(-c_2+\delta)^2:0\biggr),\\
 \varphi_c^{(2)}(\infty_{--})\sim
 \biggl(0:0:0:0:2:0:0:0:\frac{-c_2-\delta}{4}:0:0:0:0:0:0: \sqrt{c_1}:0:4\sqrt{c_1}:\\ \hphantom{\varphi_c^{(2)}(\infty_{--})\sim\biggl(}{}
 0:0:\frac{\sqrt{c_1}(c_2-\delta)}{2}:0:0:\sqrt{c_1}(-c_2-\delta)^2:0\biggr),\\
 \varphi_c^{(2)}(\infty) \sim (0:0:0:\dots:0:-\varsigma:0:0:0:1:0:0:0:0:0:0:0).
\end{gather*}
Hence, making $\varsigma \rightarrow 0$, we obtain
\begin{gather*}
 Q_{++} :=\lim \limits_{p\rightarrow
 \infty_{++}}\varphi_c^{(2)}(p)
 = \biggl(0:0:0:0:2:0:0:0:\frac{-c_2+\delta}{4}:0:0:0:0:0:0: \sqrt{c_1}:0:\\ \hphantom{ Q_{++} :=\lim \limits_{p\rightarrow\infty_{++}}\varphi_c^{(2)}(p) =\biggl(}{}
 4\sqrt{c_1}:0:0:\frac{\sqrt{c_1}(c_2+\delta)}{2}:0:0:\sqrt{c_1}(c_2+\delta)^2:0\biggr),\\
 Q_{-+} :=\lim \limits_{p\rightarrow
 \infty_{-+}}\varphi_c^{(2)}(p)
 = \biggl(0:0:0:0:2:0:0:0:\frac{-c_2+\delta}{4}:0:0:0:0:0:0: -\sqrt{c_1}:\\ \hphantom{Q_{-+} :=\lim \limits_{p\rightarrow\infty_{-+}}\varphi_c^{(2)}(p) =\biggl(}{}
 0:-4\sqrt{c_1}:0:0:\frac{\sqrt{c_1}(-c_2-\delta)}{2}:0:0:-\sqrt{c_1}(c_2+\delta)^2:0\biggr),\\
 Q_{+-} :=\lim \limits_{p\rightarrow
 \infty_{+-}}\varphi_c^{(2)}(p)
 = \biggl(0:0:0:0:2:0:0:0:\frac{-c_2-\delta}{4}:0:0:0:0:0:0: -\sqrt{c_1}:\\ \hphantom{Q_{+-} :=\lim \limits_{p\rightarrow\infty_{+-}}\varphi_c^{(2)}(p)=\biggl(}{}
 0:-4\sqrt{c_1}:0:0:\frac{\sqrt{c_1}(-c_2+\delta)}{2}:0:0:-\sqrt{c_1}(-c_2+\delta)^2:0\biggr),\\
 Q_{--} :=\lim \limits_{p\rightarrow
 \infty_{--}}\varphi_c^{(2)}(p)
 = \biggl(0:0:0:0:2:0:0:0:\frac{-c_2-\delta}{4}:0:0:0:0:0:0: \sqrt{c_1}:0:\\ \hphantom{Q_{--} :=\lim \limits_{p\rightarrow\infty_{--}}\varphi_c^{(2)}(p)=\biggl(}{}
 4\sqrt{c_1}:0:0:\frac{\sqrt{c_1}(c_2-\delta)}{2}:0:0:\sqrt{c_1}(-c_2-\delta)^2:0\biggr),\\
 T :=\lim \limits_{p\rightarrow \infty}\varphi_c^{(2)}(p)=(0:0:0:\dots:0:1:0:\dots:0).
\end{gather*}
The points $Q_{\epsilon_1\epsilon_2}$, $T$ are all distinct. We
then deduce that~$\varphi_c^{(2)}$ extends into an
embedding of~\smash{$\overline{\varphi}_c^{(2)}$}. The curve
\smash{$\mathcal{D}_c^{(2)}$} is therefore not singular. The
singularities of the curves can be seen from Figure~\ref{completanta4}.
Notice that Figure~\ref{completanta4} shows the divisor and coincides with that
found by M.~Adler and P.~van~Moerbeke (see \cite[Table~2]{al_Moer4}). Hence the conjecture is verify.

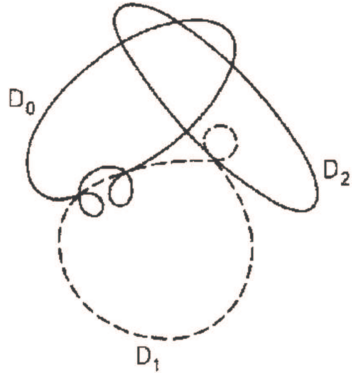
\begin{figure}[t]
 \centering
  \begin{tikzpicture}[smooth,scale=1,thick,font=\small]
\draw[-,dashed] (0.2,0.75) .. controls(-0.3,1.2) and (0.7,1.2) .. (0.2,0.75)
.. controls(1.75,-2.9) and (-3.9,0.62) .. (0.2,0.75);
\draw[-] (-0.5, 2) .. controls(0.7,2.4) and (0.6,2.0) .. (0, 1.2) .. controls(-0.25,0.9) .. (-0.75,0.65) .. controls(-1.1,0.15) and (-0.4,0.4) .. (-0.75,0.65) .. controls(-0.9,0.80) and (-1.1,0.7) .. (-1.15,0.42) .. controls(-1.1,0.0) and (-0.7,0.5) .. (-1.15,0.42)
.. controls(-1.8,0.2) and (-2.0,1.40) .. (-0.5,2);
\draw[rotate=-45] (-0.85,1.05) ellipse (1.35 and 0.4);
\node[scale=0.7] at (1.3,0.75) {$\mathcal{D}_2$};
\node[scale=0.7] at (-0.5,-1.0) {$\mathcal{D}_1$};
\node[scale=0.7] at (-1.75,1.3) {$\mathcal{D}_0$};
\end{tikzpicture}

\vspace{-20mm}

 \caption{Curves completing the invariant surfaces $\mathbb{F}_c$
 of the $a_4^{(2)}$ Toda lattice in abelian surfaces, where~$\mathcal{D}_i$ is the curve
 \smash{$\mathcal{D}_c^{(i)}$}.}\label{completanta4}
 \end{figure}

\subsubsection{The quadratic differential equations and holomorphic}
Now we want to show that the vector field
$(\varphi_c)_{\ast}\mathcal{V}_1$ extends into a
holomorphic vectors field on~$\mathbb{P}^{24}$.
 This is done by
exhibiting the quadratic differential equations in two of the
charts. We will use the following
\begin{Lemma}[{\cite{al_Moer3}}] Let $\mathcal{X}$ be a vector field on
$\mathbb{P}^{N}$ holomorphic in two different maps $(Z_i\neq
0)$ and $(Z_j\neq 0)$. Then $\mathcal{X}$ is
holomorphic over $\mathbb{P}^{N}$. That is to say on any map
$(Z_j\neq 0)$.
\end{Lemma}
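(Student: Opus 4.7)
The plan is to reduce the statement to a Hartogs-type extension argument: the locus where $\mathcal{X}$ could fail to be holomorphic is confined to a subvariety of codimension at least two in $\mathbb{P}^{N}$, while the polar set of any meromorphic object on a smooth variety is either empty or pure of codimension one, so comparison of codimensions forces it to be empty.

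First, I would fix an arbitrary index $l \in \{0,\dots,N\}$ and express $\mathcal{X}$ in the standard affine chart $U_l = \{Z_l \neq 0\}$ with coordinates $w_k = Z_k/Z_l$, in the form $\mathcal{X} = \sum_{k} g_{k}^{(l)}(w)\, \partial/\partial w_k$. Since the transition maps between standard charts on $\mathbb{P}^{N}$ are rational, the coefficients $g_{k}^{(l)}$ are \emph{a priori} rational functions on $U_l$; in particular, $\mathcal{X}$ is globally a meromorphic vector field on $\mathbb{P}^{N}$.

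Second, I would pinpoint the possible polar locus of each $g_{k}^{(l)}$. The transition between the $Z_i$-chart and the $Z_l$-chart is a biholomorphism $U_l \cap U_i \to U_l \cap U_i$, so the hypothesis that $\mathcal{X}$ is holomorphic on $(Z_i \neq 0)$ implies that every $g_{k}^{(l)}$ is holomorphic on $U_l \cap U_i$; likewise on $U_l \cap U_j$. Therefore the polar locus of $g_{k}^{(l)}$ inside $U_l$ lies in $U_l \setminus (U_i \cup U_j) = \{Z_i = 0\} \cap \{Z_j = 0\} \cap U_l$, which is cut out of $U_l \cong \mathbb{C}^{N}$ by two independent linear forms and hence has complex codimension two.

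Finally, I would invoke the purity of polar divisors: on the smooth variety $U_l$, the polar locus of a nonzero rational (equivalently meromorphic) function is either empty or pure of codimension one, which is exactly Hartogs' extension theorem across codimension-two analytic subsets. A codimension-one subset cannot be contained in a codimension-two subvariety, so the polar locus of $g_{k}^{(l)}$ must be empty, i.e.\ $g_{k}^{(l)}$ is holomorphic on $U_l$. Since $l$ was arbitrary, $\mathcal{X}$ is holomorphic on every standard chart and hence on the whole of $\mathbb{P}^{N}$. The only nontrivial ingredient, and the main obstacle to keep track of, is this codimension-one purity/Hartogs statement; the remainder is elementary bookkeeping about the standard atlas of projective space.
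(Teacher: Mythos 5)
Your argument is correct, and it is essentially the standard proof of this lemma; note that the paper itself does not prove it at all --- it simply quotes the statement from Adler--van Moerbeke--Vanhaecke, where the proof is exactly the codimension-two extension argument you give: in any third chart $U_l=\{Z_l\neq 0\}$ the coefficients of $\mathcal{X}$ are holomorphic on $U_l\cap(U_i\cup U_j)$, whose complement $\{Z_i=Z_j=0\}\cap U_l$ is a linear subspace of codimension two in $U_l\cong\mathbb{C}^N$, so they extend holomorphically across it. One small wrinkle: your intermediate claim that the coefficients $g_k^{(l)}$ are \emph{a priori} rational on $U_l$ is neither needed nor justified in the generality of the statement (it holds in the paper's application, where the vector field is polynomial in an affine chart, but a merely holomorphic field on $U_i$ need not push forward to something meromorphic across $\{Z_i=0\}$). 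Fortunately your final step does not rely on it: the hypothesis on the two charts already makes each $g_k^{(l)}$ holomorphic on the complement of the codimension-two set, and the second Riemann removable-singularity theorem (the Hartogs-type statement you invoke) applies directly to holomorphic functions there, giving the extension without any detour through purity of polar divisors; the extensions agree with $\mathcal{X}$ on a dense open set, hence glue to a holomorphic vector field on all of $\mathbb{P}^N$.
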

\begin{Proposition} The vector field
$(\varphi_c)_{\ast}\mathcal{V}_1$ extends into a
holomorphic vectors field on $\mathbb{P}^{24}$.
\end{Proposition}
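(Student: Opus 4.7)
The plan is to verify condition (4) of the complex Liouville theorem (Theorem~\ref{liouville}) by applying the preceding lemma: it suffices to produce two affine charts $(z_i\neq 0)$ and $(z_j\neq 0)$ of $\mathbb{P}^{24}$ whose union contains every point of the divisor $\mathcal{D}_c$, and to exhibit on each of these charts an explicit system of quadratic differential equations for the pushed-forward vector field $(\varphi_c)_{\ast}\mathcal{V}_1$. Quadratic vector fields are polynomial, hence holomorphic on the corresponding affine chart; the lemma then upgrades this to global holomorphy on all of $\mathbb{P}^{24}$.

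First I would compute $(\varphi_c)_{\ast}\mathcal{V}_1$ in the original chart $(z_0\neq 0)$ by differentiating each of the 24 polynomials $z_1,\dots,z_{24}$ listed in~\eqref{fonctionszi} along the flow~\eqref{systa4}. The key observation is that each $\dot z_i$ is again a weight-homogeneous polynomial in $x_0,\dots,y_2$, and using the relations $F_1=c_1$, $F_2=c_2$, $F_3=c_3$ together with the Casimir $y_0+2y_1+2y_2=0$, I~would rewrite $\dot z_i$ as a polynomial of weight $\varpi(z_i)+1$ in the $z_j$'s. By the choice of the 25 functions, the list is closed under this operation up to adding linear combinations, and this gives a polynomial system for $\dot z_i$ in the affine chart $(z_0\neq 0)$ that is of degree at most~$2$ in the $z_j$.

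Second, to cover the divisor at infinity, I would choose a second affine chart centered on a coordinate $z_j$ that does not vanish identically on $\mathcal{D}_c=\mathcal{D}_c^{(0)}\cup\mathcal{D}_c^{(1)}\cup\mathcal{D}_c^{(2)}$. Inspecting the asymptotic expansions computed in the previous subsection, the coordinate $z_{24}=x_2x_1^3x_0^2$ (or alternatively one of the top-weight coordinates appearing with a nonzero leading term at each singular point $P_{\pm},Q_{\epsilon_1\epsilon_2},T$) is a natural candidate. Setting $u_i=z_i/z_{24}$ for $i\neq 24$ and $u_0=1/z_{24}$, I would compute $\dot u_i=\dot z_i/z_{24}-z_i \dot z_{24}/z_{24}^2$ and verify, again using weight homogeneity and the three conserved quantities, that the resulting system is also quadratic in the $u_i$. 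At this point the lemma yields that $(\varphi_c)_{\ast}\mathcal{V}_1$ is holomorphic on~$\mathbb{P}^{24}$.

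The main obstacle is the second step: identifying a coordinate $z_j$ that is nonvanishing on every component of $\mathcal{D}_c$, and then verifying by direct (if tedious) computation that the 25 expressions $\dot u_i$ are genuinely polynomials of degree at most two after the conserved quantities and the Casimir have been substituted. Once quadraticity is established in both charts, everything else is formal: holomorphy on $(z_0\neq 0)$ is automatic, holomorphy on the second chart extends the vector field across the divisor $\mathcal{D}_c$, and the lemma guarantees that no other singularities can appear in the remaining charts of $\mathbb{P}^{24}$.
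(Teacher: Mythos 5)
Your overall strategy coincides with the paper's: write $(\varphi_c)_{\ast}\mathcal{V}_1$ as an explicit quadratic (hence holomorphic) vector field in the chart $Z_0\neq 0$, do the same in one more affine chart, and invoke the stated lemma that a vector field holomorphic in two distinct charts of $\mathbb{P}^{N}$ is holomorphic on all of $\mathbb{P}^{N}$; the paper's second chart is $Z_1\neq 0$, with coordinates $s_i=z_i/z_1$, and both quadratic systems are exhibited by direct computation using $F_i=c_i$.

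There is, however, a concrete error in how you justify the second chart. You require the union of the two charts to contain every point of $\mathcal{D}_c$, and you assert that $z_{24}=x_2x_1^3x_0^2$ has a nonzero leading term at each singular point $P_{\pm}$, $Q_{\epsilon_1\epsilon_2}$, $T$. By the paper's own limit computations this is false: the $z_{24}$-coordinate of $Q_{\epsilon_1\epsilon_2}$ and of $T$ is $0$, and $z_{24}$ vanishes identically on $\mathcal{D}_c^{(2)}$ (the last entry of $\varphi_c^{(2)}$ is $0$); since $z_0$ vanishes on all of $\mathcal{D}_c$, your two charts do not cover the divisor, so under your own sufficiency criterion the argument would be incomplete. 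The saving point, which you are missing, is that coverage is irrelevant: the lemma works because the complement of the union of two coordinate charts has codimension two in $\mathbb{P}^{24}$, so holomorphy in \emph{any} two charts already forces holomorphy everywhere, in particular near $\mathcal{D}_c^{(2)}$ and the singular points --- which neither your charts nor the paper's contain ($z_1$ also vanishes identically on $\mathcal{D}_c^{(2)}$ and at $P_{\pm}$, $Q_{\epsilon_1\epsilon_2}$, $T$). Once the coverage requirement is dropped, your plan is the paper's proof with $z_{24}$ in place of $z_1$, and the real content is the non-automatic verification, via $F_i=c_i$ and the Casimir, that each $\dot u_i$ is polynomial in the new chart; the paper performs this explicitly for $s_i=z_i/z_1$, where the low weight of $z_1$ keeps the system quadratic, whereas the weight-$12$ function $z_{24}$ would make this check much heavier, and quadraticity there is neither guaranteed nor needed (polynomiality suffices for holomorphy).
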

\begin{proof}
It suffices to show that the vector field
$(\varphi_c)_{\ast}\mathcal{V}_1$ is holomorphic on two
chart of $\mathbb{P}^{24}$. To do this, let us establish that this
vector field can be written as a quadratic vector field in the chart
$Z_0\neq0$ and $Z_1\neq0$. In the chart $Z_0\neq0$, we obtain the
following result:
\begin{gather*}
 \dot{z}_1= -\frac{1}{2}(c_2-4(z_3+z_4)+z_1(4z_2-z_1)) ,\\
 \dot{z}_2= \frac{1}{4}(3(z_4-z_1z_2)+2(z_2^2-z_3)) ,\\
 \dot{z}_3= \frac{1}{2}z_6-z_2z_3 ,\\
 \dot{z}_4= \frac{1}{2}(z_4(z_1+z_2)-c_2z_2+2z_6-z_5) ,\\
 \dot{z}_5= \frac{1}{2}(z_1z_5-c_3)+2(6z_7-z_9-z_2z_5) ,\\
 \dot{z}_6= -\frac{1}{2}(z_2z_5+z_6(z_1-z_2)+3z_9)+2(4z_7+z_8)-z_3^2 ,\\
 \dot{z}_7= -\frac{1}{2}(z_{11}+2z_2z_7) ,\\
 \dot{z}_8= \frac{1}{2}z_1z_8-z_{10} ,\\
 \dot{z}_9= \frac{1}{2}(z_2(c_2z_3z+z_9-24z_7)+z_5(z_3-z_4))-4z_{11}-z_3z_6 ,\\
 \dot{z}_{10}= \frac{1}{2}(-2z_2z_{10}+z_4z_8+4z_{14}) ,\\
 \dot{z}_{11}= \frac{1}{2}(z_{13}+2(z_3z_7+2z_{14})+z_{11}(z_2-z_1)) ,\\
 \dot{z}_{12}= \frac{1}{2}(2z_3z_{10}-z_5z_8-4z_1z_{14}) ,\\
 \dot{z}_{13}= \frac{1}{2}(z_{13}(z_2-z_1)+z_5z_7-4z_{15}) ,\\
 \dot{z}_{14}= \frac{1}{2}z_1z_{14} ,\\
 \dot{z}_{15}= \frac{1}{2}(z_{18}-2z_7z_8+z_{15}(z_1+z_2)),\\
 \dot{z}_{16}= \ast,\\
 \dot{z}_{17}= \Delta,\\
 \dot{z}_{18}= -\frac{1}{2}z_5z_{14}+z_7z_{10},\\
 \dot{z}_{19}= 4z_7z_{10}-\frac{1}{2}z_{14}(z_1-4z_5) ,\\
 \dot{z}_{20}= \frac{1}{2}(z_{22}+z_2z_{20}-z_4z_{19}+4z_8(z_{13}-2z_{14})),\\
 \dot{z}_{21}= 2z_7(z_{12}+2z_{14})-z_8(z_{13}+2z_{14}),\\
 \dot{z}_{22}= \frac{1}{2}(z_1z_{22}-z_7z_{16})-2(z_{10}z_{13}+6z_1c_1) ,\\
 \dot{z}_{23}= \star ,\\
 \dot{z}_{24}= \frac{1}{2}(z_{24}(z_1-4z_2))
\end{gather*}
with
\begin{gather*}
 \ast=\frac{1}{4}(z_{16}(z_1-4z_2)+z_8(32z_7+2z_8-c_2z_4)+4(2z_{19}+z_5z_{10})), \\
 \star=\frac{1}{2}(z_{23}(z_1+z_2)+5z_{16}(8z_{11}+c_3z_2))+128z_{14}(z_{13}+2z_{14})+c_3z_8(c_3-32z_7)\\ \hphantom{\star=}{}
 +64c_1z_1(2z_2-z_1)-4c_3z_{19},\\
 \Delta =\frac{1}{2}(4(3z_{18}+2z_1z_{15})-z_1z_{17}-z_4z_{13}-z_7(8z_8+c_3)).
\end{gather*}
 For the second chart $Z_1\neq0$, we put $s_i=z_i/z_1$, for
all $i=0,\dots,24$. Then the quadratic differential equations take
the following form:
\begin{gather*}
 \dot{s}_0= \frac{1}{2}(s_0(c_2s_0-4(s_3+s_4))+4s_2-1) , \\
 \dot{s}_1= 0 , \\
 \dot{s}_2= \frac{1}{2}(s_0(c_2s_2-s_5-4s_6)+s_2(4s_3+s_4)-s_4) , \\
 \dot{s}_3= \frac{1}{2}(s_0(c_3s_0-32s_7+4s_9)-s_5+s_6+2s_2s_5) , \\
 \dot{s}_4= \frac{1}{2}\bigl(s_0(8s_8+6s_9)+s_5(6s_2-1)+s_4^2\bigr), \\
 \dot{s}_5= 4s_0(s_{10}+s_{11})-s_9+8s_2s_7-s_4s_5 , \\
 \dot{s}_6= \frac{1}{2}s_0(16s_{10}-24s_{11}-c_2s_6)+s_2(2c_2s_3-16s_7-s_9)-s_3(s_5+s_6)-2s_8+s_9 , \\
 \dot{s}_7= s_0(s_{13}-2s_{14})-\frac{1}{2}s_{11} , \\
 \dot{s}_8= \frac{1}{2}(s_0(c_2s_8-32s_{14}))+s_{10}(4s_2-1)-6s_3s_8, \\
 \dot{s}_9= \frac{3}{4}(s_0(c_3s_4+32s_{14}+16s_{13})-c_3s_2) \\ \hphantom{\dot{s_9}=}{}
 +\frac{1}{2}\bigl(s_2(2(c_3s_2+8s_{11}-2s_{10})-c_2s_6) +s_4(s_9-4s_8)+s_5^2+2s_6^2-16s_{11}\bigr), \\
 \dot{s}_{10}= -\frac{1}{2}s_0s_{16}-2s_{14}-s_5s_8 , \\
 \dot{s}_{11}= -2s_{14}+4s_0s_{15}+s_{13}(s_2-1)-\frac{1}{2}s_7(c_2+6s_5) , \\
 \dot{s}_{12}=\frac{1}{8}(c_2s_4s_8+s_{16}(4s_2+1))-s_8(4s_7+3s_8), \\
 \dot{s}_{13}= \frac{1}{2}(s_4s_{13}+s_{17}+s_7(3c_3s_0-48s_7-4s_8))-6s_0s_{17} , \\
 \dot{s}_{14}= -\frac{1}{2}(s_0s_{19}-4s_7s_8), \\
 \dot{s}_{15}= \frac{1}{2}(s_0s_{20}+s_4s_{15}-s_5s_{14}+4s_7s_{10}), \\
 \dot{s}_{16}= 2(8s_{18}+3s_{19})+\frac{1}{2}s_{16}(c_2s_0-4s_{4})+2s_{10}(8s_8+c_3s_0+32s_7) \\ \hphantom{\dot{s_{16}}=}{}
 -s_5(c_2s_8+4s_{12})-32s_0s_{21}, \\
 \dot{s}_{17}= 2s_{14}(s_{5}-2s_{6})+s_{11}(4s_{8}-c_2s_{3})+\frac{1}{2}s_{7}(c_2(2s_5-c_2s_2)+c_3+8s_{10})-3s_5s_{13} \\ \hphantom{\dot{s_{17}}=}{}
-2s_0s_{20}+\frac{1}{2}s_4s_{17} , \\
 \dot{s}_{18}= \frac{1}{2}(s_3s_{19}+s_{21})+8c_1s_0^2+2s_7s_{12}+s_8(2s_{14}-s_{13}), \\
 \dot{s}_{19}= \frac{1}{2}c_2s_0s_{19}-2s_0s_{22}+24s_8s_{14}+2s_{21} , \\
 \dot{s}_{20}= \frac{1}{2}(s_{22}+s_4s_{20}-c_2s_2s_{19})+s_6s_{19}-s_7s_{16}-8s_8s_{15}, \\
 \dot{s}_{21}= 4c_1s_0+2s_{10}s_{13}-s_7s_{16}, \\
 \dot{s}_{22}= \frac{1}{2}(s_0(c_3s_{19}-16s_{24}))+8\bigl(s_{10}s_{15}-8s_{14}^2\bigr)+4c_1, \\
 \dot{s}_{23}= \frac{1}{2}(s_{23}(s_4+c_2s_0)-c_3(s_{8}(3c_3+40s_{10})-2s_3s_{16}-4s_{19}(4s_2+3)))-96s_{24}(1+2s_2) \\ \hphantom{\dot{s_{23}}=}{}
 +8s_{16}(4s_{13}+10s_{14}-c_2s_7)-192c_1s_4+160s_{11}s_{19}-96s_{24}, \\
 \dot{s}_{24}= \frac{1}{2}(s_{10}s_{20}-s_{12}s_{19}-2s_4s_{24}).
\end{gather*}
This shows that the vector field
$(\varphi_c)_{\ast}\mathcal{V}_1$ extends into a linear
vector field $\overline{\mathcal{V}}_1$ on $\mathbb{P}^{24}$.
\end{proof}

Thus, the item four of the complex Liouville theorem (Theorem
\eqref{liouville}) is satisfied. We now show that the integral
curves of $\overline{\mathcal{V}}_1$ that start at the three
singular points go into the affine immediately
\begin{Proposition}
 The flow $\Phi_t$ of the vector field $\overline{\mathcal{V}}_1$ on $\mathbb{P}^{24}$ coming from the points of $\mathcal{D}_c$ is sent into the affine part $\varphi_c(\mathbb{F}_c)$.
\end{Proposition}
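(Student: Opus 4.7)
The plan is to verify condition~5 of the complex Liouville theorem (Theorem~\ref{liouville}) at every point of the divisor $\mathcal{D}_c=\bigcup_{i=0}^{2}\mathcal{D}_c^{(i)}$, by exploiting the three principal balances $x(t;m_i)$, $i=0,1,2$, which are the very mechanism by which integral curves of $\mathcal{V}_1$ reach infinity. I would split the argument into a generic case, treated directly by Laurent series, and the finitely many singular points of $\mathcal{D}_c$, treated by continuity of the holomorphic flow guaranteed by the previous proposition.

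First I would fix a smooth point $p\in\mathcal{D}_c^{(i)}$ admitting a unique preimage $q\in\Gamma_c^{(i)}$ under $\varphi_c^{(i)}$. Writing $x(t;m_i,q)$ for the corresponding Laurent solution, the three constants of motion evaluate to $c=(c_1,c_2,c_3)$ along this series by the very construction of $\Gamma_c^{(i)}$; hence for $0<|t|\ll 1$ the phase-space values $x(t;m_i,q)$ lie in $\mathbb{F}_c$, while $\lim_{t\to 0}\varphi_c(x(t;m_i,q))=\varphi_c^{(i)}(q)=p$. Because $x(t;m_i,q)$ is a genuine solution of \eqref{systa4}, its image under $\varphi_c$ is an integral curve of $(\varphi_c)_\ast\mathcal{V}_1=\overline{\mathcal{V}}_1$. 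By uniqueness of the flow of the holomorphic vector field $\overline{\mathcal{V}}_1$, this image coincides with $t\mapsto\Phi_t(p)$, so $\Phi_t(p)\in\varphi_c(\mathbb{F}_c)$ for all small $t\neq 0$, and the required condition holds at $p$.

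Second, I would handle the finitely many singular points of $\mathcal{D}_c$ exhibited earlier (the nodes $P_\pm$ and the $Q_{\epsilon_1\epsilon_2}$, the point $T$, together with the intersections between distinct components displayed in Figure~\ref{completanta4}). For such a point $p$, I would choose a sequence $p_n\to p$ of smooth points treated in the previous step; each satisfies $\Phi_t(p_n)\in\varphi_c(\mathbb{F}_c)$ for $0<t\ll 1$. Since $\overline{\mathcal{V}}_1$ is holomorphic on all of $\mathbb{P}^{24}$ by the previous proposition, the flow $\Phi_t$ is continuous, and hence $\Phi_t(p)=\lim_n \Phi_t(p_n)$ lies in the closure $\overline{\varphi_c(\mathbb{F}_c)}=\varphi_c(\mathbb{F}_c)\cup\mathcal{D}_c$. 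To exclude $\Phi_t(p)\in\mathcal{D}_c$ for all small $t>0$, I would argue that this would force the integral curve of $\overline{\mathcal{V}}_1$ through $p$ to remain inside the one-dimensional divisor, hence to be tangent at $p$ to one of its local branches; this tangency can be tested directly by reading off $\overline{\mathcal{V}}_1(p)$ from the quadratic equations established in the previous proposition in an appropriate affine chart containing $p$.

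The hard part will be precisely this last transversality check at each node: in each case one must compute $\overline{\mathcal{V}}_1$ in a suitable chart (for instance $Z_0\neq 0$ or $Z_1\neq 0$, or otherwise a chart in which the singular point lies in finite position), confirm that the resulting vector is nonzero and not tangent to either branch of $\mathcal{D}_c$ at the node, and conclude that the orbit leaves $\mathcal{D}_c$ immediately. Once this transversality is verified at each of the finitely many nodes, condition~5 of Theorem~\ref{liouville} holds at every point of $\mathcal{D}_c$, and combined with items 1--4 verified in the previous subsections, the theorem implies that $\overline{\varphi_c(\mathbb{F}_c)}$ is an abelian surface on which $(\varphi_c)_\ast\mathcal{V}_1$ and $(\varphi_c)_\ast\mathcal{V}_2$ extend to translation-invariant holomorphic vector fields, completing the proof of algebraic complete integrability of the $a_4^{(2)}$ Toda lattice.
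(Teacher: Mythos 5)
Your first step (the smooth points of $\mathcal{D}_c$ coming from affine points of the Painlev\'e curves, handled via the Laurent solutions and uniqueness of the flow) is exactly what the paper dismisses as ``automatic'', and it is fine. The genuine gap is in your treatment of the singular points $Q_{\epsilon_1\epsilon_2}$, $P_{\epsilon}$ and $T$. Your plan is to rule out $\Phi_t(p)\in\mathcal{D}_c$ by checking that $\overline{\mathcal{V}}_1(p)$ is nonzero and transverse to \emph{every} local branch of $\mathcal{D}_c$ at the node. That test fails here: as the paper itself records when it counts multiplicities, at $P_{\epsilon}$ and $T$ the vector field $\overline{\mathcal{V}}_1$ is tangent to one of the branches through the point (transverse only to the other). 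Tangency at a point does not mean the orbit stays in the divisor, so the proposition is still true, but your criterion is inconclusive precisely at the points that need work; one must go to higher order. That is what the paper does: it substitutes the principal balances, expresses the free parameters through the local parameter $\varsigma$ at the corresponding point at infinity of $\Gamma_c^{(i)}$, chooses a coordinate $z_j$ that is nonzero at the node (so that $1/z_j$ is an affine coordinate in the chart $Z_j\neq 0$ vanishing on the divisor), and verifies that $\lim_{\varsigma\to 0} \frac{1}{z_j}(t,\varsigma)$ is a nonzero series in $t$ --- for instance $\frac{1}{4}t^2+\cdots$ for $z_4$ at $Q_{\epsilon_1\epsilon_2}$, $\frac{1}{12}t^4+\cdots$ for $z_9$ and $z_8$ at $P_{\epsilon}$, and $\frac{1}{2304}t^7+\cdots$ for $z_{17}$ at $T$ --- which shows the orbit through the node enters the affine part immediately, tangency notwithstanding.

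A second, smaller issue: your continuity argument places $\Phi_t(p)$ in $\overline{\varphi_c(\mathbb{F}_c)}$ and you then write this closure as $\varphi_c(\mathbb{F}_c)\cup\mathcal{D}_c$; but that equality is part of the \emph{conclusion} of the complex Liouville theorem and is not available at this stage. A priori the boundary $\Delta=\overline{\varphi_c(\mathbb{F}_c)}\setminus\varphi_c(\mathbb{F}_c)$ could contain further one-dimensional components through the nodes, so excluding $\mathcal{D}_c$ alone does not place the orbit in the affine part. The paper closes this off by a separate computation showing that no other divisors pass through $Q_{\epsilon_1\epsilon_2}$, $P_{\epsilon}$ and $T$ (the order-of-vanishing counts for $1/z_4$, $1/z_8$ and $1/z_{17}$ along the branches). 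The direct Laurent-series computation sidesteps both problems at once, since it exhibits the limiting integral curve explicitly and shows it lies in the affine chart for $t\neq 0$.
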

\begin{proof}
Note that this is automatic for the points of $\varphi_c\bigl(\mathbf{\Gamma}_c^{(0)}\bigr)$, $\varphi_c\bigl(\mathbf{\Gamma}_c^{ (1)}\bigr)$ and $\varphi_c\bigl(\mathbf{\Gamma}_c^{(2)}\bigr)$. We then just need to check for the singular points $Q_{\varepsilon_1\varepsilon_2}$, $P_{\varepsilon}$ and $T$ in $\mathbb{P}^{24}$.

 Consider the $4$ points $Q_{\varepsilon_1\varepsilon_2}$
intersections of $\mathcal{D}_c^{(0)}$ and
$\mathcal{D}_c^{(2)}$. From
\eqref{valeurde_inftyvarepsilon1epsilon2_D0} and
\eqref{plongementphi0} follow that the leading coefficient of
\smash{$z_4\bigl(t;\mathbf{\Gamma}_c^{(0)}\bigr)$} has a pole
for $\varsigma=0$, that is maximal with the leading coefficient of
\smash{$z_9\bigl(t;\mathbf{\Gamma}_c^{(0)}\bigr)$} and thus the
function $z_4$ defines a map at these points.
 Let us show~then that \smash{$\lim_{p\rightarrow
 \infty_{\epsilon_1\epsilon_2}}\frac{1}{z_4}\varphi_c^{(0)}\neq0$}.

 Using \eqref{balancemainm0}, we get
 \[
 z_4(t;m_0)=\frac{2a}{t}-4d+(-e+2ad-2ac)t+\frac{1}{6}\bigl(ae-4\bigl(2d^2+da^2\bigr)+4cd\bigr)t^2+O\bigl(t^3\bigr).
 \]
 The first terms of the inverse of this series are then
 given by
\begin{equation}\label{inverse4}
 \frac{1}{z_4(t;m_0)}=\frac{t}{2a}+\frac{d}{a^2}t^2+O\bigl(t^3\bigr).
\end{equation}

Substituting
$e=-\frac{4a^4-32a^2d-a^2c_2+64d^2+4dc_2+c_3}{8a}$, $c=\frac{a^2}{3}-\frac{4d
}{3}-\frac{c_2}{12}$ in the second term and rewriting the
coefficients according to the local parameter $\varsigma$ around
$\infty_{\epsilon_1\epsilon_2}$ using
\eqref{valeurde_inftyvarepsilon1epsilon2_D0}, we obtain
\[\lim \limits_{\varsigma\rightarrow
 0}\frac{1}{z_4}(t,\varsigma)=\frac{1}{4}t^2+O\bigl(t^3\bigr)\neq0,\]
 which prove that the integrals curves of
 $\overline{\mathcal{V}}_1$ which start from the points $Q_{\epsilon_1\epsilon_2}$
 are immediately sent in the affine part $\varphi_c(\mathbb{F}_c)$.

 For the point $P_{\epsilon}$, intersection points of $\mathcal{D}_c^{(0)}$ and $\mathcal{D}_c^{(1)}$,
 the residue having the largest pole among the residues of
 \smash{\raisebox{-0.3pt}{$z_0\bigl(t;\mathbf{\Gamma}_c^{(1)}\bigr),\dots,z_{24}\bigl(t;\mathbf{\Gamma}_c^{(1)}\bigr)$}}
 is \smash{$z_9\bigl(t;\mathbf{\Gamma}_c^{(1)}\bigr)$}, the function $z_{15}$ defines a local chart around this point. Consider the balances $x(t;m_1)$
 \ref{balancemainem1} and rewriting $a$, $b$, $c$, $d$ of
 \smash{$1/z_9\bigl(t;\mathbf{\Gamma}_c^{(1)}\bigr)$} depending on the local parameter $\varsigma$ around $\infty_{\epsilon}$ using \smash{$c = \frac{1}{12}a^2- \frac{1}{24}c_2$, $e =\frac{c_1}{d^2}$} and \eqref{valeurde_inftyvarepsilon_D1},
 we obtain \[\lim \limits_{\varsigma\rightarrow
 0}\frac{1}{z_9}(t,\varsigma)=\frac{1}{12}t^4+O\bigl(t^6\bigr)\neq0.\]
 Consider the balances $x(t;m_0)$
 \eqref{balancemainm0} and the local parametrization \eqref{valeurde_inftyvarepsilon_D0}, we also obtain
 \[
 \lim \limits_{\varsigma\rightarrow 0}\frac{1}{z_8}(t,\varsigma)=\frac{1 }{12}t^4+O\bigl(t^6\bigr)\neq0.
 \]
 Thus, the different limits found are different from zero, which also shows that the integral curve of $\overline{\mathcal{V}}_1$ which starts from the points $P_{\epsilon}$ are
 immediately sent in the affine part $\varphi_c(\mathbb{F}_c)$.

 For the point $T$, intersection point of $\mathcal{D}_c^{(2)}$ and $\mathcal{D}_c^{( 1)}$,
 the only non-zero coordinate corresponds to the function \smash{$z_{17}\bigl(t;\mathbf{\Gamma}_c^{(1)}\bigr)$}, the function $z_{17}$
 defines a local map around this point. Consider the balances $x(t;m_1)$ \eqref{balancemainem1}
 and $x(t;m_2)$ \eqref{balancemainem2} by writing $a$, $c$, $d$, $e$ of \smash{$1/z_{17}\bigl(t;\mathbf {\Gamma}_c^{(1)}\bigr)$} depending on the local parameter $\varsigma$ around $\infty_{\epsilon}$ and
 using \smash{$c = \frac{ 1}{12}a^2-\frac{1}{24}c_2$, $e =\frac{c_1}{d^2}$} and \eqref{valeurde_inftyvarepsilon_D1}, we get
 \[
 \lim \limits_{\varsigma\rightarrow 0}\frac{1}{z_{17}}(t,\varsigma)=\frac{1}{2304}t^7+O\bigl(t^8\bigr)\neq0.
 \]
 Considering the balances $x(t;m_2)$
 \eqref{balancemainem2} and local parametrization
 \eqref{valeurde_inftym2}, we also get
 \[
 \lim\limits_{\varsigma\rightarrow 0}\frac{1}{z_{17}}(t,\varsigma)=-\frac{1 }{1152}t^7+O\bigl(t^8\bigr)=-2\times\biggl(\frac{1}{2304}t^7\biggr)+O\bigl(t^8 \bigr)\neq0.
 \]
 Thus, the different limits found are different from zero,
 which also shows that the integral curve of $\overline{\mathcal{V}}_1$ which starts from the points $T$ are
 immediately sent in the affine part $\varphi_c(\mathbb{F}_c)$.

 Thus, the flow of the vector field $\overline{\mathcal{V}}_1$
 starting from each of the points of \smash{$\mathcal{D}_c^{(0)}\cup
 \mathcal{D}_c^{(1)}\cup \mathcal{D}_c^{(2)}$} goes into the affine part of $\varphi_c(\mathbb{ F}_c)$.

 In order to complete the proof of algebraic integrability, it is necessary to show that there exist no other divisors passing through the points $Q_{\epsilon_1\epsilon_2}$, $P_{\epsilon}$ and $T$.

 For the four intersections points $Q_{\epsilon_1\epsilon_2}$ of \smash{$\mathcal{D}_c^{(0)}$} and
 \smash{$\mathcal{D}_c^{(2)}$}, by rewriting the coefficients of the right-hand side of~(\ref{inverse4}) with respect to the local parameter $\varsigma$ in the neighborhood of the points \smash{$\infty_{\epsilon_1\epsilon_2}\in\overline{\Gamma}^{(0)}_c$}, we find
 \[
 \frac{1}{z_4\bigl(t;\Gamma^{(0)}_c\bigr)}=\frac{1}{4}\bigl(2\varsigma t+t^ 2\bigr)+O\bigl(t^3,\varsigma t^2\bigr),
 \]
 which shows that the multiplicity of \smash{$\frac{1}{z_4}$} at each of these points is equal to $2$, which coincides~with the sum of the orders of zero of \smash{$\frac {1}{z_4}$} on each of the divisors so there are no other divisors passing through the points $Q_{\epsilon_1\epsilon_2}$.

 For the points $P_{\epsilon}$ and $T$ of the divisor $\mathcal{D}_c$, obtained from the points at infinity $\infty_{\epsilon}$, $\infty_{1}$ and~$\infty_{2}$, let us check that the degree of $\mathcal{D}_c$ which is $3$ is indeed equal to the degree of \smash{$\overline{\varphi_c(\mathbb{F}_c)}\setminus \varphi_c(\mathbb{F}_c)$} at these points. As the vector field $\overline{\mathcal{V}}_1$ is only tangent to one of the branches (transverse to the other) of~$\mathcal{D}_c^{1}$ passing through these two points, we do the expansion along the non-tangent branch. As~the~function $z_{17}$ defines a map at point $T$, we just need to substitute \eqref{valeurde_inftyvarepsilon_D1}
 into its inverse series,
 \[
 \frac{1}{z_{17}\bigl(t;\mathbf{\Gamma}_c^{(1)}\bigr)}=-\frac{t}{d\bigl(-16d+12ac+a^3\bigr)}+ O\bigl(t^2\bigr)=-\frac{2t}{d\bigl(-32d-ac_2+4a^3\bigr)}+ O\bigl(t^2\bigr),
 \]
 which leads to
 \[
 \frac{1}{z_{17}\bigl(t;\mathbf{\Gamma}_c^{(1)}\bigr)}=-\frac{1}{ 4\sqrt{c_1}}\varsigma t+ O\bigl(t^3\bigr),
 \]
 thus showing that there are no other divisors passing through $T$.

For points $P_{\epsilon}$, we do this by calculating
 the first terms of the series \smash{$1/z_{8}\bigl(t;\mathbf{\Gamma}_c^{(0)}\bigr)$} using Laurent's solution $x(t;m_0)$ then we express the free parameters according to the local parameter~$\varsigma$ in a~neighborhood of \eqref{valeurde_inftyvarepsilon_D0}. The resulting series in $\varsigma$ and $t$ should
 start with monomials of degree~$3$ because the point $P_{\epsilon}$ has multiplicity $2$ and $1$ on the divisors \smash{$\mathcal{D}_c^ {(0)}$} and \smash{$\mathcal{D}_c^{(1)}$}, respectively, and the function $z_{8}$ has a simple pole on each of these divisors. We have
 \[
 \frac{1}{z_{8}\bigl(t;\mathbf{\Gamma}_c^{(0)}\bigr)}=\frac{1}{e}t-\frac{at^2}{e} + O\bigl(t^3\bigr),
 \]
 which leads to
 \[
 \frac{1}{z_{8}\bigl(t;\mathbf{\Gamma}_c^{(0)}\bigr)}=2\varsigma^2 t^ 2+O\bigl(\varsigma^3,\varsigma t^3\bigr),
 \] thus showing that there are no other passing divisors
 by $P_{\epsilon}$. Furthermore this also shows that $P_{\epsilon}$
 is an ordinary double point for the divisor \smash{$\mathcal{D}_c^{(0)}$}. So, there are no other divisors in \smash{$\overline{\varphi_c(\mathbb{F}_c)}\setminus
 \varphi_c(\mathbb{F}_c)$} besides the divisors \smash{$\mathcal{D}_c^{(0)}$, $\mathcal{D}_c^{(1)}$} and
 \smash{$\mathcal{D}_c^{(2)}$} already found.
\end{proof}

 The Liouville complex theorem conditions being
 satisfied, it follows that for $c\in \Omega$, the
 projective variety \smash{$\overline{\varphi_c(\mathbb{F}_c)}=
 \varphi_c(\mathbb{F}_c)\cup \mathcal{D}_c$} is an abelian surface and the restrictions of vector fields~$\overline{\mathcal{V}}_1$
 and~$\overline{\mathcal{V}}_2$ to these abelian surfaces are linear.
 Since \smash{$\overline{\varphi_c(\mathbb{F}_c)}$} contains a smooth curve of genus~$2$, it is the Jacobian of this curve.
 We have therefore proved the following theorem.

\begin{Theorem}
Let $(\mathcal{H}, \{\cdot,\cdot\},\mathbf{F})$ be an
integrable system  describing the $a_4^{(2)}$ Toda lattice, where
$\mathbf{F}=(F_1,F_2,F_3)$ and $\{\cdot,\cdot\}$ are
given, respectively, by \eqref{invar1a4} and \eqref{matrij2} with
commuting vector fields~\eqref{systa4}.
\begin{itemize}\itemsep=0pt
 \item [$(i)$] $(\mathcal{H}, \{\cdot,\cdot\},\mathbf{F})$ is a weight homogeneous
 algebraical completely integrable system.
 \item [$(ii)$] For $c\in \Omega$, the fiber $\mathbb{F}_c$ of its
 momentum map is completed in an abelian surface
 $\mathbb{T}^2_c$ $\bigl($the Jacobian of the hyperelliptic curve $($of genus two$)$ \smash{$\overline{\Gamma}_c^{(2)} \bigr)$}
 by the addition of a singular~divisor~$\mathcal{D}_c$
 composed of three irreducible components: \smash{$\mathcal{D}_c^{(0)}$} defined by
 \begin{gather*}
 \Gamma_c^{(0)}\colon \ 16d^2a^8-\bigl(256d^3+8d^2c_2\bigr)a^6+ \bigl(1536d^2+96dc_2+8c_3+c_2^2\bigr)d^2a^4\\
 \hphantom{\Gamma_c^{(0)}\colon}{} \
 -\bigl(\bigl(8\bigl(8c_3+48dc_2+c_2^2+512d^2\bigr)d+2c_2c_3\bigr)d^2+64c_1\bigr)a^2 \\
 \hphantom{\Gamma_c^{(0)}\colon}{} \
 +\bigl(8d\bigl(c_2c_3+ 16dc_3+64d^2c_2 +512d^3+2dc_2^2\bigr)+c_3^2\bigr)d^2=0,
\end{gather*}
 and
 \smash{$\mathcal{D}_c^{(1)} $} defined by
\begin{equation*}
 \Gamma_c^{(1)}\colon \ 256ad^3-\bigl(\bigl(4a^2-c_2\bigr)^2-16c_3\bigr)d^2+64c_1=0,
\end{equation*}
two singular curves of respective genus $3$ and $4$ and one
 smooth curve and \smash{$\mathcal{D}_c^{(2)}$} defined by
\begin{equation*}
 \Gamma_c^{(2)}\colon \ e^4a^4-\bigl(8c_1+c_2e^2\bigr)a^2e^2-64e^5+4e^2c_1c_2+4c_3e^4+16c_1^2=0
\end{equation*}
 of genus $2$ and isomorphic to \smash{$\overline{\Gamma}_c^{(2)} $}. The curves intercept each other as indicated in Figure~$\ref{completanta4}$.
\end{itemize}
\end{Theorem}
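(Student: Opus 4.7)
The plan is to derive the theorem by invoking the complex Liouville theorem (Theorem~\ref{liouville}) with the data $(\mathbb{F}_c,\mathcal{V}_1,\mathcal{V}_2,\varphi_c)$, where $\varphi_c$ is the embedding in \eqref{plongementP24} and $\Delta' = \mathcal{D}_c = \mathcal{D}_c^{(0)}\cup\mathcal{D}_c^{(1)}\cup\mathcal{D}_c^{(2)}$. The five hypotheses of that theorem are exactly what has been assembled piecewise in Section~\ref{sec3}; the task is to check them off and then identify the resulting abelian surface as a Jacobian of $\overline{\Gamma}_c^{(2)}$.

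Hypotheses (1)--(3) are already in place: the $25$ polynomials $z_i$ in \eqref{fonctionszi} separate points and have independent differentials on $\mathbb{F}_c$, so $\varphi_c$ is an isomorphic embedding; the Liouville-integrability propositions of Section~\ref{sec3} furnish the independence of $\mathcal{V}_1,\mathcal{V}_2$ at each point of the smooth two-dimensional fiber $\mathbb{F}_c$ for $c\in\Omega$; and $[\mathcal{V}_1,\mathcal{V}_2]=0$ follows from $\{F_2,F_3\}=0$ via the Poisson bracket morphism. Hypothesis~(4), that $(\varphi_c)_\ast\mathcal{V}_1$ extends holomorphically near $\Delta'$, is supplied by the explicit quadratic expressions of this vector field in the two affine charts $Z_0\neq 0$ and $Z_1\neq 0$ displayed above, combined with the chart lemma which promotes holomorphy on two charts to holomorphy on all of $\mathbb{P}^{24}$.

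The substantive step is hypothesis~(5), which I would treat one class of singular points at a time: the four points $Q_{\epsilon_1\epsilon_2} = \mathcal{D}_c^{(0)}\cap\mathcal{D}_c^{(2)}$, the two points $P_\epsilon = \mathcal{D}_c^{(0)}\cap\mathcal{D}_c^{(1)}$, and the single point $T = \mathcal{D}_c^{(1)}\cap\mathcal{D}_c^{(2)}$. At each such point I select the coordinate $z_k$ whose pole order along the incoming balance is maximal---here $z_4$, $z_8$ and $z_{17}$ respectively---and substitute the appropriate principal balance $x(t;m_i)$ into $1/z_k$, re-expressing the free parameters through the local parameter $\varsigma$ at the corresponding point at infinity via \eqref{valeurde_inftyvarepsilon_D0}--\eqref{valeurde_inftyvarepsilon3_D0}, \eqref{valeurde_inftyvarepsilon_D1}--\eqref{valeurde_infty2_D1} and \eqref{valeurde_inftyvarepsilonm2}--\eqref{valeurde_inftym2}. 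A nonzero leading $t$-term of the resulting series is precisely the statement that the integral curve of $\overline{\mathcal{V}}_1$ leaves $\mathcal{D}_c$ immediately, and a careful comparison of the total bidegree in $(\varsigma,t)$ with the sum of the intersection multiplicities of $\mathcal{D}_c^{(i)}$ at the point simultaneously rules out any further irreducible component of $\overline{\varphi_c(\mathbb{F}_c)}\setminus\varphi_c(\mathbb{F}_c)$.

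With the five hypotheses verified, Theorem~\ref{liouville} will give that $\overline{\varphi_c(\mathbb{F}_c)} = \varphi_c(\mathbb{F}_c)\cup\mathcal{D}_c$ is an abelian surface $\mathbb{T}_c^2$, with $\Delta''=\varnothing$, and that $(\varphi_c)_\ast\mathcal{V}_1,(\varphi_c)_\ast\mathcal{V}_2$ extend to holomorphic, hence translation-invariant, vector fields on it; this delivers item~(i). For item~(ii), I would observe that the smooth genus-two curve $\overline{\Gamma}_c^{(2)}\cong\mathcal{D}_c^{(2)}$ sits inside the abelian surface $\mathbb{T}_c^2$, whence the universal property of the Jacobian identifies $\mathbb{T}_c^2$ with $\operatorname{Jac}(\overline{\Gamma}_c^{(2)})$; the explicit equations of the three divisor components are then precisely those of Propositions~\ref{proposition3.4}, \ref{proposition3.5} and the proposition immediately following them, and the incidences are those of Figure~\ref{completanta4}. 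The hard part of the scheme is condition~(5): one must correctly match the pole orders of the selected coordinate functions $z_k$ along each component of $\mathcal{D}_c$ with the orders of vanishing produced by the balance substitution, since a single mismatch either invalidates~(5) or betrays a missed divisor component.
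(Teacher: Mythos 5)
Your proposal follows essentially the same route as the paper: the theorem is obtained by checking the five hypotheses of the complex Liouville theorem for $(\mathbb{F}_c,\mathcal{V}_1,\mathcal{V}_2,\varphi_c)$ — the embedding \eqref{plongementP24}, commutativity and independence of the vector fields, holomorphic extension of $(\varphi_c)_\ast\mathcal{V}_1$ via the quadratic equations in the charts $Z_0\neq0$, $Z_1\neq0$, and the flow/no-extra-divisor analysis at $Q_{\epsilon_1\epsilon_2}$, $P_\epsilon$, $T$ through the series $1/z_4$, $1/z_9$ (and $1/z_8$), $1/z_{17}$ — and then identifying $\mathbb{T}^2_c$ as the Jacobian of the smooth genus-two curve $\overline{\Gamma}_c^{(2)}$ it contains. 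This matches the paper's argument in both structure and detail, so the proposal is correct.
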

\section[Geometry of the a\_4\^{}(2) Toda lattice: holomorphic differentials forms]{Geometry of the $\boldsymbol{a_4^{(2)}}$ Toda lattice:\\ holomorphic differentials forms}\label{sec4}
In this section, we show following an idea by Luc Haine (see
\cite{Haine}) how the holomorphic differentials are used to
determine the tangency locus of the vector field $\mathcal{V}_1$ on
the divisor $\mathcal{D}_c$, we compute the holomorphic
differentials $\omega_1$ and $\omega_2$ on the three irreducible
components of $\mathcal{D}_c$ that come from the differentials
${\rm d}t_1$ and ${\rm d}t_2$ on the abelian surface $\mathbb{T}^2_c$. We know
that all irreducible components have multiplicity $1$.

 Let us calculate the holomorphic differential forms $\omega_1$ and
 $\omega_2$ on the divisor $\mathcal{D}_c$ which come from the differentials ${\rm d}t_1$ and ${\rm d}t_2$ on the abelian surface
 $\mathbf{T}^2_c$. Let $y_0:=z_1$ and $y:=z_4$ be
 restricted to \smash{$\mathcal{D}_c^{(0)}$}, the first coefficients of their Laurent series are given by
 \begin{gather*}
y_0^{(0)}= -2, \qquad  y_0^{(1)}= 2c,\qquad
 y^{(0)}= 2a, \qquad  y^{(1)}=-4d,
 \end{gather*}
and using \eqref{balancemainm0} and \eqref{syst3},
 we have \[\mathcal{V}_2\biggl[\frac{1}{z_1}\biggr]_{|\mathcal{D}_c^{(0)}}=-2a^2+8d , \qquad
 \mathcal{V}_2\biggl[\frac{z_4}{z_1}\biggr]_{|\mathcal{D}_c^{(0)}}=-4ea-24cd.\]
It then follows that
\begin{align*}
 \delta &{}= \frac{1}{\bigl(y_0^{(0)}\bigr)^2}
 \begin{vmatrix}
 y_0^{(0)} & \mathcal{V}_2\bigl[\frac{1}{y_0}\bigr]_{|\mathcal{D}_c^{(0)}} \\[1mm]
 y_0^{(0)}y^{(1)}-y^{(0)}y_0^{(1)} & \mathcal{V}_2\bigl[\frac{z_4}{y_0}\bigr]_{|\mathcal{D}_c^{(0)}}
 \end{vmatrix}
 = \frac{1}{4}
 \begin{vmatrix}
 -2 & -2a^2+8d \\
 8d-4ac & -4ea-24cd \\
 \end{vmatrix}\\
 &{}= 2ea+12cd+4a^2d-2a^3c-16d^2+8cad.
\end{align*}
The holomorphic differential forms ${\rm d}t_1$ and ${\rm d}t_2$ restricted to
\smash{$\mathcal{D}_c^{(0)}$} are given by
\begin{gather*}
 \omega_1 = \frac{1}{\delta y_0^{(0)}}{\rm d}\Biggl(\frac{y^{(0)}}{y_0^{(0)}}\Biggr)=-\frac{{\rm d}a}{\delta }
 \qquad \text{and} \qquad
 \omega_2 = -\frac{1}{\delta }\mathcal{V}_2\biggl[\frac{1}{y_0}\biggr]_{|\mathcal{D}_c^{(0)}}
 {\rm d}\Biggl(\frac{y^{(0)}}{y_0^{(0)}}\Biggr)=-\frac{2a^2-8d}{\delta}{\rm d}a.
\end{gather*}

 To determined differential forms on the divisor
\smash{$\mathcal{D}_c^{(1)}$}, consider the functions $y_0:=z_1$
and $y:=z_3$, restricted to \smash{$\mathcal{D}_c^{( 1)}$}, the
first coefficients of their Laurent series are given by
 \begin{gather*}
  y_0^{(0)}= 2, \qquad   y_0^{(1)}= a,\qquad
  y^{(0)}= -a, \qquad   y^{(1)}=-3c+\frac{a^2}{4},
 \end{gather*}
 and using \eqref{balancemainem1} and \eqref{syst3},
 we have \[\mathcal{V}_2\biggl[\frac{1}{z_1}\biggr]_{|\mathcal{D}_c^{(1)}}=-6c-\frac{a^2}{2}, \qquad
 \mathcal{V}_2\biggl[\frac{z_3}{z_1}\biggr]_{|\mathcal{D}_c^{(1)}}=-2e+4ad+18c^2-3a^2c-\frac{3a^4}{8},\]
then
\begin{align*}
 \delta &{}= \frac{1}{\bigl(y_0^{(1)}\bigr)^2}
 \begin{vmatrix}
 y_0^{(0)} & \mathcal{V}_2\bigl[\frac{1}{y_0}\bigr]_{|\mathcal{D}_c^{(1)}} \\[1mm]
 y_0^{(0)}y^{(1)}-y^{(0)}y_0^{(1)} & \mathcal{V}_2\bigl[\frac{z_3}{y_0}\bigr]_{|\mathcal{D}_c^{(1)}}
 \end{vmatrix}
 \\
 &{}= \frac{1}{4}
 \begin{vmatrix}
 2 & -6c-\frac{a^2}{2} \\
 -6c+\frac{3a^2}{2} & -2e+4ad+18c^2-3a^2c-\frac{3a^4}{8} \\
 \end{vmatrix}
 = 2ad-e.
\end{align*}
The holomorphic differential forms ${\rm d}t_1$ and ${\rm d}t_2$ restricted to
\smash{$\mathcal{D}_c^{(1)}$} are given by
\begin{gather*}
 \omega_1 = \frac{1}{\delta y_0^{(0)}}{\rm d}\Biggl(\frac{y^{(0)}}{y_0^{(0)}}\Biggr)=-\frac{{\rm d}a}{4\delta }
 \qquad \text{and} \qquad
 \omega_2 = -\frac{1}{\delta }\mathcal{V}_2\biggl[\frac{1}{y_0}\biggr]_{|\mathcal{D}_c^{(1)}}
 {\rm d}\Biggl(\frac{y^{(0)}}{y_0^{(0)}}\Biggr)=-\frac{a^2+12c}{4\delta}{\rm d}a.
\end{gather*}

 For calculating differential forms on the divisor
\smash{$\mathcal{D}_c^{(2)}$}, consider the functions $y_0:=z_2$
and $y:=z_4$ restricted to \smash{$\mathcal{D}_c^{(2 )}$}, the
first coefficients of their Laurent series are given by
 \begin{gather*}
  y_0^{(0)}= -2, \qquad   y_0^{(1)}= 2d,\qquad
  y^{(0)}= -2a, \qquad   y^{(1)}=-4c,
 \end{gather*}
 and using \eqref{balancemainem2} and \eqref{syst3},
 we have \[\mathcal{V}_2\biggl[\frac{1}{z_2}\biggr]_{|\mathcal{D}_c^{(2)}}=2c-\frac{a^2}{2}, \qquad
 \mathcal{V}_2\biggl[\frac{z_4}{z_2}\biggr]_{|\mathcal{D}_c^{(2)}}=16e-96cd.\]
It then follows that
\begin{align*}
 \delta &{}= \frac{1}{\bigl(y_0^{(1)}\bigr)^2}
 \begin{vmatrix}
 y_0^{(0)} & \mathcal{V}_2\bigl[\frac{1}{y_0}\bigr]_{|\mathcal{D}_c^{(2)}} \\[1mm]
 y_0^{(0)}y^{(1)}-y^{(0)}y_0^{(1)} & \mathcal{V}_2\bigl[\frac{z_3}{y_0}\bigr]_{|\mathcal{D}_c^{(2)}}
 \end{vmatrix}
 = \frac{1}{4}
 \begin{vmatrix}
 -2 & 2c-\frac{a^2}{2} \\
 8c+4ad & 16e-96cd \\
 \end{vmatrix}\\
 &{}= 48cd-8e+ca^2+\frac{a^3d}{2}-4c^2-2acd.
\end{align*}
The holomorphic differential forms ${\rm d}t_1$ and ${\rm d}t_2$ restricted to
\smash{$\mathcal{D}_c^{(2)}$} are given by
\begin{gather*}
 \omega_1 = \frac{1}{\delta y_0^{(0)}}{\rm d}\Biggl(\frac{y^{(0)}}{y_0^{(0)}}\Biggr)=-\frac{{\rm d}a}{2\delta }
 \qquad \text{and} \qquad
 \omega_2 = -\frac{1}{\delta }\mathcal{V}_2\biggl[\frac{1}{y_0}\biggr]_{|\mathcal{D}_c^{(2)}}
 {\rm d}\Biggl(\frac{y^{(0)}}{y_0^{(0)}}\Biggr)=-\frac{4c-a^2}{2\delta}{\rm d}a.
\end{gather*}
The zero of differentials forms $\omega_1$ and $\omega_2$ gives the
tangency points of fields vector $\mathcal{V}_1$ and $\mathcal{V}_2$
respectively.

\begin{Remark}
In this paper, we use the \textsc{Maple} 13 application to develop and
implement algorithms for determining Laurent series, various curves.
We also use this application to determine the Painlev\'{e} divisors,
the $z_i$ functions, embedding to the projective space and give
the different quadratic differential equations of the two chart
$Z_0$ and $Z_1$.
\end{Remark}

\subsection*{Acknowledgements}
We would like to extend our sincere gratitude to Professor Pol
Vanhaecke at University of Poitiers for his particular contributions
in providing clarifications and guidance on our research theme, for
the enriching exchanges and thoughtful advice he generously offered
us throughout this project. We cannot end our acknowledgements
without thanking all the referees of this paper. We wish to express
our thanks to the referees for their valuable helpful comments and
suggestions.

\pdfbookmark[1]{References}{ref}
\LastPageEnding

\end{document}